\documentclass[10pt]{article}
\usepackage{epsfig,graphics,graphicx,subfigure,color}
\usepackage{amsfonts,amssymb,amsthm,amsmath,amsbsy,mathtools,cuted,kotex,subeqnarray,ocgx}

\newcommand{\interior}[1]{  {\kern0pt#1}^{\mathrm{o}}}
\newcommand{\norm}[1]{\left\lVert#1\right\rVert}

\newtheorem{theorem}{Theorem}
\newtheorem{corollary}[theorem]{Corollary}
\newtheorem{lem}[theorem]{Lemma}

\def\Bbb R{{\rm \bf R}}

\def\gathered{\begin{array}{c}}
\def\endgathered{\end{array}}
\def\text{\mbox}

\begin{document}
\title  {On the Uniqueness of Solutions in GPS Source Localization: Distance and Squared-Distance Minimization under Limited Measurements in Two and Three Dimensions}
\author{Kiwoon KWON\footnote{ Department of Mathematics, Dongguk University-Seoul, 04620 Seoul, South Korea.
e-mail address:kwkwon@dongguk.edu}
}
\maketitle

\begin{abstract}
The source localization problem, fundamental to applications like GPS, is typically approached as a minimization problem in the presence of various types of noise. Ensuring the uniqueness of solutions in GPS technology is vital for the reliability and accuracy of applications, from everyday navigation to critical military operations. In this paper, we examine two key minimization problems: one focused on distance error and the other on squared distance error. We explore these problems in both three-dimensional space, the standard scenario, and in two-dimensional space as a simplified case. Furthermore, we discuss the number of possible source solutions when the number of measurements is fewer than three.



\noindent
AMS: 
\noindent KEY WORDS: GPS, source localization, minimization
\end{abstract}

\section{Introduction}
The uniqueness of the source localization problem in Global Positioning System (GPS) technology is a fundamental aspect that ensures the reliability and accuracy of various applications, ranging from everyday navigation to critical military operations. This paper explores key issues related to the uniqueness of source localization and its implications for the stability and performance of GPS systems.

A central challenge in source localization is accurately and uniquely determining the position of a signal emitter despite the presence of noise, multipath effects, and other environmental factors. Multipath propagation, where GPS signals reflect off surfaces such as buildings and terrain, can introduce significant errors and ambiguities in position estimation. This issue is particularly severe in urban and densely built environments, leading to potential misidentification of the true signal source.

Another key factor is satellite geometry, the geometric configuration of the satellites in view. Poor satellite geometry can lead to weak geometric dilution of precision (GDOP), making it difficult to achieve unique and precise localization, especially in environments with limited satellite visibility, such as deep urban canyons or densely forested areas. Furthermore, precise time synchronization between the GPS satellites and the receiver is critical, as discrepancies in clock synchronization can introduce errors that compromise the uniqueness and accuracy of the localization process. Synchronizing time accurately across multiple satellites and receivers is a technical challenge that requires advanced algorithms and error correction techniques.

Additionally, the presence of multiple potential signal sources can complicate the localization process. Distinguishing between these sources and accurately identifying the correct one is essential for maintaining the stability and reliability of the GPS system, necessitating sophisticated signal processing techniques and robust algorithms capable of handling complex scenarios with overlapping signals.

In this paper, we investigate the conditions under which the proposed minimization problems yield unique solutions, particularly when the number of measurements is fewer than three. This aspect is crucial, as fewer than three measurements can make the problem significantly more challenging, leading to potential ambiguities and non-uniqueness in the solution. Understanding how this limitation impacts the problem and identifying strategies to address it are key contributions of this work.

We formulate the source localization problem as follows: Let 
 $Z_i \in \mathbb R^n$ (with $n=2$ or $n=3$ represent the location of the 
 $i$-th sensor ($i=1,\cdots,I$ for $I=1,2,3$), and let $d_i\ge0$ denote the measured distance between the source and the $i$-th sensor. 
 The goal is to determine $X$,  the location that minimizes the sum of the additive noise, expressed as:
 \begin{equation}\label{eq:main}
                    X = \mathrm{argmin}_{\tilde W} O(\tilde W) = \{W\in \mathbb R^2 \mbox{ or } \mathbb R^3 | O(W) = \min_{\tilde W} O(\tilde W) \}, \;   O(W)= \sum_{j=1}^I|\psi(\norm{W-Z_i}) - \psi(d_i)|
 \end{equation}
where $\psi$ is a non-negative, continuous, increasing function defined on $[0,\infty)$, vanishing only at zero. In this paper, we specifically consider the cases 
$\psi(y)=y$ and $\psi(y)=y^2$, which is called in this paper `distnace error' and `squared distance error', respectively. 

Since $X$ is understood as a set rather than a single point, we define 
$|X|$ as the number of elements in the set $X$. The objective function 
$O$ is nonnegative, continuous, and satisfies:
$$\lim_{||W||\rightarrow \infty}O(W)=\infty$$ 
uniformly in all directions in $\mathbb R^2$ or $\mathbb R^3$.
Therefore, by similar arguments as in \cite{KL1, KL2, KL2Cor}, a solution 
$X$ exists for the equation above.

Let $D_i, i=1,\cdots, I$ be the measurement closed ball centered at 
$Z_i$  with radius $d_i$, and let $S_i$ be its boundary, $D_i^o = D_i\setminus S_i$, and
$$
S=\cup_{j=1}^I S_j.     
$$
In the following sections, we investigate the number of solutions and their locations when the number of measurements is one (Section 2), two (Section 3), and three (Section 4).

\section{One measuremt (I=1)}
Let us first consider $I=1$. The solution $X$ is a circle in two dimenstions and a sphere in three dimensions, both centered at $\mathbb O$ with a radius $d_1$. 
In other words,
$$ X=S_1, \mbox{ and } |X|=1. $$
However, the behavior of the level set differs between the cases $\psi(y)=y$ and $\psi(y)=y^2$ as illustrated in Fig. \ref{fig:Meas1}. In the distance error case ($\psi(y)=y$), the Euclidean distance between the level sets remains uniform. 
In contrast, for the squared distance error case ($\psi(y)=y^2$), the Euclidean distance decreases as the location moves furher away from the measurment circle (which is depicted as the unit circle in Fig. \ref{fig:Meas1}). Consequently, the 
minimization problem for squared distance error exhibits faster local convergence compared to the problem with distance error. This behavior is consistent in both two-dimensional and three-dimensional spaces. 

In figures \ref{fig:Meas1} (g) and \ref{fig:Meas1} (h), three dimensional gradient
vectors are plotted. The gradient directions point outward if the starting points are outside the unit sphere (the 0-level set)  and inward if the starting points are inside the unit sphere. The length of the gradient vectors remains constant 
in the distance error case in (g), whereas in the squared distance error case (h), the length is propotional to the distance between the starting point and the origin.

 \begin{figure}
\begin{minipage}[t]{17cm}
\centerline{\epsfig{file=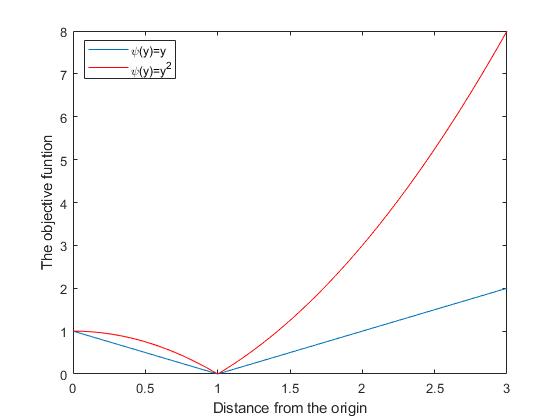, height=3.5cm,width=6cm,clip=1cm}}
\end{minipage}
\begin{center}
(a)
\end{center}
\begin{minipage}[t]{8cm}
\centerline{\epsfig{file=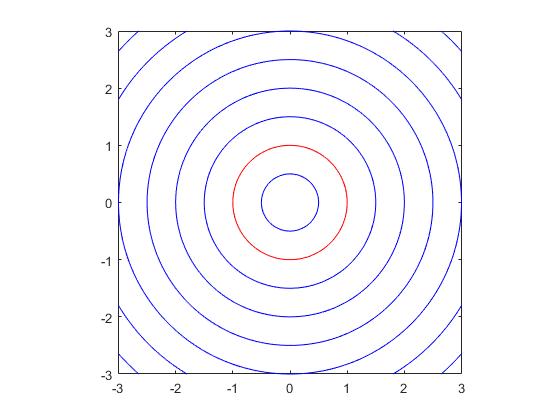, height=3.5cm,width=5.5cm,clip=1cm}}
\end{minipage}
\begin{minipage}[t]{8cm}
\centerline{\epsfig{file=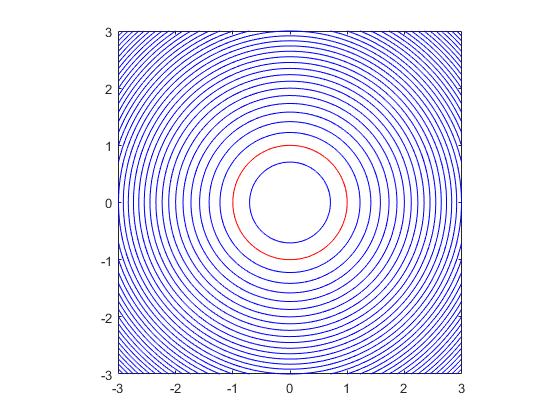, height=3.5cm,width=5.5cm,clip=1cm}}
\end{minipage}
\begin{center}
(b)\qquad\quad\qquad\qquad\qquad\qquad\qquad\qquad\qquad\qquad\qquad(c)
\end{center}
\begin{minipage}[t]{8cm}
\centerline{\epsfig{file=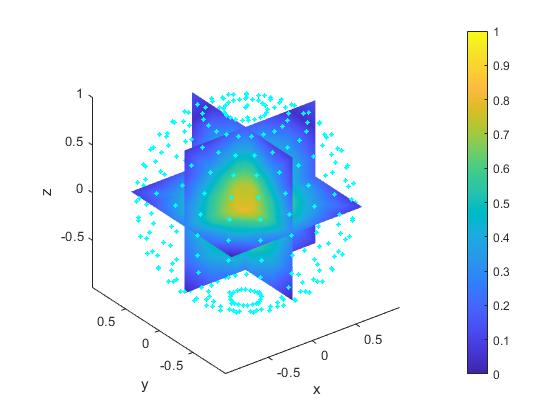, height=3.5cm,width=5.5cm,clip=1cm}}
\end{minipage}
\begin{minipage}[t]{8cm}
\centerline{\epsfig{file=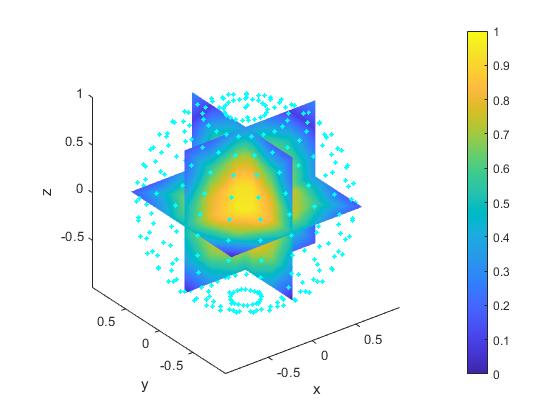, height=3.5cm,width=5.5cm,clip=1cm}}
\end{minipage}
\begin{center}
(d)\qquad\quad\qquad\qquad\qquad\qquad\qquad\qquad\qquad\qquad\qquad(e)
\end{center}
\begin{minipage}[t]{8cm}
\centerline{\epsfig{file=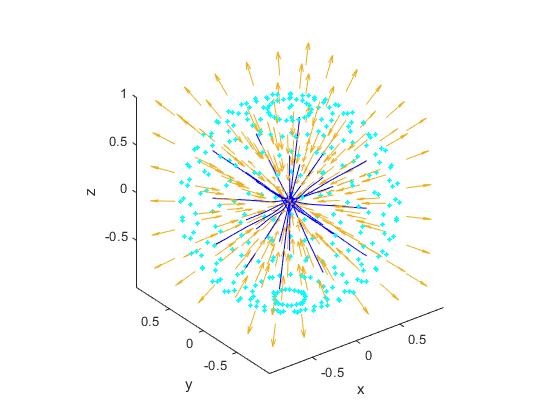, height=3.5cm,width=5.5cm,clip=1cm}}
\end{minipage}
\begin{minipage}[t]{8cm}
\centerline{\epsfig{file=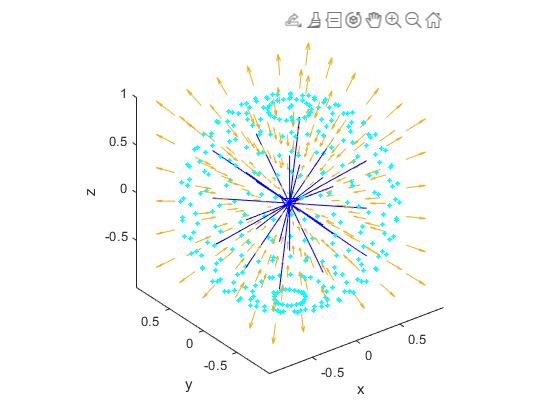, height=3.5cm,width=5.5cm,clip=1cm}}
\end{minipage}
\begin{center}
(f)\qquad\quad\qquad\qquad\qquad\qquad\qquad\qquad\qquad\qquad\qquad(g)
\end{center}
\caption{(a) The relation between the distance to $Z_1$ and the objective function is shown for both the distance error (blue line) and the squared distance error(red line).  (b,c) The level sets of the objective function are depicted for $Z_1=\mathbb O$ and $d_1=1$ in the $xy$-plane.The red line is the level set $0$ and the blue lines are level sets for $0.5, 1, 1.5,$ and so on.  A three dimensional sphere is plotted with blue dots. In (d) and (e), the values of level sets are displayed in $xy-, yz-, zx-$ planses. 
In (f) and (g), the streamlines
and the gradient vectors are shown. Specifically, (b),(d), and (e) corresponds to the distance error case, while (c),(d), and (g) represent the squared distance error case. }
\label{fig:Meas1} 
\end{figure}

\section{Two measuremts (I=2)}
Let us decompose $\mathbb R^n, n=2,3$ as four regions:
$$ D_1\setminus D_2, D_2 \setminus D_1, D_1 \cap D_2, (D_1\cup D_2)^c. $$
Define $S_{12}=S_1 \cap S_2$ and $N_{\pm 1} = Z_1 \pm d_1 \frac{Z_2 - Z_1}{||Z_2-Z_1||}$. 
Similarily, $N_2, N_{-2}$ can be defined. 
If $S_{12}\neq\phi$, then  $|S_{12}|=1,2$ for $n=2$ and $|S_{12}|=1,\infty$ for $n=3$.
Without loss of generality, assume $d_1 \ge d_2$.

The singular point for the objective function is denoted by $Y_0$ and its definition of $Y_0$ is different 
for distance and squared distace errors as follows:
$$ Y_0 = \left\{
\begin{array}{ccc}
               \frac{Z_1 + Z_2}2                                              &\mbox{ for }&   \psi(y) = y^2, \\
               \overline{Z_1 Z_2} \setminus (D_1 \cup D_2)  &\mbox{ for }&   \psi(y)=y. \\
\end{array}\right.
$$

\subsection{The squared distance error case  ($\psi(y)=y^2$)}
Note that $Y_0= \frac{Z_1 + Z_2}2$ and $|Y_0|=1$. We now state the following lemma:
\begin{lem}
Under the condition that the following sets are nonempty in $\mathbb R^n, n=2,3$, we have\\
(a) The minimum in $(D_1 \cup D_2)^c$ is the point nearest to $Y_0$.\\
(b) The minimum in $D_1\cap D_2$ is the farest point to $Y_0$.\\
(c) The minimum in $D_1\setminus D_2$ is $N_1$ if $N_1\notin D_2$ and $S_{12}$ if $N_1 \in D_2$.
\end{lem}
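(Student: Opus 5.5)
The plan is to remove the absolute values region by region. On each of the four regions the signs of $\|W-Z_i\|^2-d_i^2$ are fixed, so $O$ becomes an explicit quadratic or affine function. I interpret ``minimum in'' a region as the minimum over its closure, since otherwise it need not be attained.

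The basic tool is the parallelogram identity
\begin{equation*}
\|W-Z_1\|^2+\|W-Z_2\|^2 = 2\|W-Y_0\|^2+\tfrac12\|Z_1-Z_2\|^2, \qquad Y_0=\tfrac{Z_1+Z_2}{2}.
\end{equation*}
\emph{Part (a).} On $(D_1\cup D_2)^c$ both terms are positive, so
\begin{equation*}
O(W)=2\|W-Y_0\|^2+\tfrac12\|Z_1-Z_2\|^2-d_1^2-d_2^2 .
\end{equation*}
This is a strictly increasing function of $\|W-Y_0\|$, so the minimizer is the point of the region nearest to $Y_0$.

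\emph{Part (b).} On $D_1\cap D_2$ both terms are nonpositive, so
\begin{equation*}
O(W)=d_1^2+d_2^2-2\|W-Y_0\|^2-\tfrac12\|Z_1-Z_2\|^2 .
\end{equation*}
This is strictly decreasing in $\|W-Y_0\|$, so the minimizer is the farthest point from $Y_0$. Both (a) and (b) are then immediate.

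\emph{Part (c).} On $D_1\setminus D_2$ we have
\begin{equation*}
O(W)=d_1^2-\|W-Z_1\|^2+\|W-Z_2\|^2-d_2^2=2\langle W,Z_1-Z_2\rangle+\|Z_2\|^2-\|Z_1\|^2+d_1^2-d_2^2 .
\end{equation*}
This is an affine function that decreases strictly in the direction $u=(Z_2-Z_1)/\|Z_2-Z_1\|$. Over the whole ball $D_1$, such a function is minimized uniquely at the extreme point in direction $u$, which is $N_1=Z_1+d_1u$. Hence if $N_1\notin D_2$ (more precisely $N_1\notin D_2^o$), $N_1$ lies in the closure of $D_1\setminus D_2$ and is the minimizer.

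If $N_1\in D_2$, I use the radical hyperplane
\begin{equation*}
H=\{W:\|W-Z_1\|^2-d_1^2=\|W-Z_2\|^2-d_2^2\},
\end{equation*}
which is a level set of the affine function above and is orthogonal to $u$. Let $W\in D_1$ lie strictly on the $Z_2$ side of $H$. Then $\|W-Z_2\|^2-d_2^2<\|W-Z_1\|^2-d_1^2\le 0$, so $W\in D_2^o$. Consequently the closure of $D_1\setminus D_2$ lies in the closed half-space on the $Z_1$ side of $H$, where $O$ is at least its value on $H$. That value is attained exactly on $H\cap \overline{D_1\setminus D_2}$. A point of $H$ lying in $D_1\setminus D_2^o$ must have both powers equal and both $\le0$ and $\ge0$, hence zero. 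So this set is $H\cap S_1\cap S_2=S_{12}$.

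The main obstacle is this last case. One must check that $S_{12}\neq\emptyset$ whenever $N_1\in D_2$ and $D_1\setminus D_2$ is nonempty; otherwise the argument above gives only a lower bound that is never attained. The configuration to worry about is $D_2\subset D_1^o$. There $N_1\notin D_2$ automatically, since $N_1\in S_1$ and $S_1\cap D_2=\emptyset$, so case (c) falls into its first alternative. The remaining configurations with $N_1\in D_2$ are $D_1\subset D_2$ and genuinely intersecting spheres.
\begin{itemize}
\item If $D_1\subset D_2$, then $D_1\setminus D_2=\emptyset$ and the hypothesis of the lemma excludes the case.
\item If the spheres genuinely intersect, I will verify by the one-dimensional picture on the line $Z_1Z_2$ that $H$ meets $S_1$. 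The ordering of $N_{-1},N_{\pm2},N_1$ on this line shows that the foot of $H$ on the segment from $N_{-1}$ to $N_1$ lies within distance $d_1$ of $Z_1$.
\end{itemize}
The assumption $d_1\ge d_2$ is used to fix this ordering and to rule out the degenerate tangency cases.
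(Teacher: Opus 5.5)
Your proposal is correct and follows the same route as the paper. On each region you fix the signs, use $\|W-Z_1\|^2+\|W-Z_2\|^2=2\|W-Y_0\|^2+\frac12\|Z_1-Z_2\|^2$ for (a) and (b), and observe that $O$ is affine in $W$ on $D_1\setminus D_2$ for (c). You justify the $N_1\in D_2$ branch, which the paper leaves implicit, but the check you flag as the main obstacle is immediate: $\|N_1-Z_2\|=|d_1-\|Z_1-Z_2\||$, so with $d_1\ge d_2$ the condition $N_1\in D_2$ is exactly $|d_1-d_2|\le\|Z_1-Z_2\|\le d_1+d_2$, i.e.\ $S_{12}\neq\emptyset$, and the branch then reduces to the fact that $O\ge 0$ vanishes exactly on $S_{12}$.
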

\begin{proof}
If $W\in (D_1 \cup D_2)^c$, then 
$$ O(W) = ||W-Z_1||^2 + ||W-Z_2||^2 - d_1^2 - d_2^2 = 2||W-Y_0||^2  + Constant. $$
Thus, we (a)  is proven.

If $W\in D_1 \cap D_2$, then 
$$ O(W) = -||W-Z_1||^2 - ||W-Z_2||^2 + d_1^2 + d_2^2 = -2||W-Y_0||^2  + Constant. $$
Thus, (b) is proven.

If $W\in D_1 \setminus D_2$, then
$$ O(W) = (Z_1-Z_2)\cdot W  +  Constant. $$
Thus, (c) is proven.
\end{proof}

Based on this lemma, we present the following theorem:
\begin{theorem}\label{th:Meas2Psi2}
In the case of two measurements with $\psi(y)=y^2$  and $n=2,3$, the followings minimums occur:\\
(a) If $d_1 \le \frac{\norm{Z_2 - Z_1}}2$, then $X=Y_0$.\\
(b) If $ \frac{\norm{Z_2-Z_1}}2 \le d_1 <\norm{Z_2-Z_1}$ and $d_1 + d_2 \le \norm{Z_2-Z_1}$, then $X=N_1$.\\
(c) If $ d_1 - d_2 < \norm{Z_2-Z_1} < d_1 + d_2$, then $X=S_{12}$.\\
(d) If $d_1 - d_2 \ge \norm{Z_2-Z_1}$, then $X=N_1$.
\end{theorem}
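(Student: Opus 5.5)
The plan is to write $L=\norm{Z_2-Z_1}$, $r_i=\norm{W-Z_i}$, so that $O(W)=|r_1^2-d_1^2|+|r_2^2-d_2^2|$. On each of the four regions $D_1\setminus D_2$, $D_2\setminus D_1$, $D_1\cap D_2$, $(D_1\cup D_2)^c$ the function $O$ is a fixed polynomial, and the preceding Lemma identifies it. It equals $2\norm{W-Y_0}^2+c$ outside both balls, $-2\norm{W-Y_0}^2+c$ inside both, and an affine function $\pm 2(Z_2-Z_1)\cdot W+c$ on $D_1\setminus D_2$ (resp. $D_2\setminus D_1$). In each case of the theorem I will locate the minimizer of each piece on the closure of its region. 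I then compare the candidates, using continuity of $O$ across the spheres $S_1,S_2$: a minimizer lying on the boundary of one region also lies in the closure of a neighbouring region, so the comparison reduces to checking that every candidate is a point where an adjacent piece takes a value at least as large as the proposed minimum.

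Case (c) is immediate. When $d_1-d_2<L<d_1+d_2$ the spheres meet, so $S_{12}\neq\emptyset$ and $O=0$ on $S_{12}$. Since $O\ge 0$ and $O(W)=0$ forces $r_1=d_1$ and $r_2=d_2$, we get $X=S_{12}$.

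For (d), $d_1-d_2\ge L$ gives $D_2\subset D_1$. On $D_1\setminus D_2$ we have $O=2(Z_1-Z_2)\cdot W+c$, which is minimized over all of $D_1$ uniquely at the point of $D_1$ furthest in direction $Z_2-Z_1$, namely $N_1$. Here $\norm{N_1-Z_2}=d_1-L\ge d_2$, so $N_1\notin D_2^o$ and $N_1$ lies in the region. On $D_1\cap D_2=D_2$ the piece is $-2\norm{W-Y_0}^2+c$, so its infimum is attained on $S_2$, which lies in the closure of $D_1\setminus D_2$. Hence its value there is at least $O(N_1)$. Outside $D_1$ the piece is $2\norm{W-Y_0}^2+c$ with $Y_0\in D_1$ (because $L/2\le d_1$), so its minimum is on $S_1$, again in the closure of $D_1\setminus D_2$. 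Strictness comes from uniqueness of the affine minimizer on $D_1$, so $X=N_1$.

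For (a) and (b) the balls have disjoint interiors, because $d_1+d_2\le 2d_1\le L$ in (a) and by hypothesis in (b). On $D_i\setminus D_j$, Lemma (c) says the affine piece is minimized at the point $N_i$ or $N_{-j}$ of $S_i$ facing the other centre; this point lies on $S_i$, hence in the closure of the exterior region. Therefore the global minimum equals the minimum of $2\norm{W-Y_0}^2+c$ over $\overline{(D_1\cup D_2)^c}$, i.e.\ it is attained at the point of that set nearest $Y_0$.
\begin{itemize}
\item In (a), $d_2\le d_1\le L/2$ puts $Y_0$ outside $D_1^o\cup D_2^o$, so $X=Y_0$.
\item In (b), $d_1\ge L/2$ puts $Y_0$ on the segment inside $D_1$. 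The nearest point of $D_1^c$ is then the point of $S_1$ on the ray from $Z_1$ through $Y_0$, which is $N_1$. It is admissible because $\norm{N_1-Z_2}=L-d_1\ge d_2$, and $Y_0\notin D_2^o$ since $d_2\le d_1<L/2$ fails only if $d_1<L/2$, which is excluded.
\end{itemize}
Uniqueness follows because the nearest point to $Y_0$ is unique and the affine pieces give strictly larger values away from their boundary minimizers. At the threshold $d_1=L/2$ we have $Y_0=N_1$, so (a) and (b) agree there.

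The main obstacle I expect is the bookkeeping rather than any single estimate. I must get the sign of the affine term in Lemma (c) right, so that its minimizer is the point of $S_1$ facing $Z_2$ (that is, $N_1$) and not $N_{-1}$. I must also verify in each case that the boundary candidates genuinely lie in the closures of adjacent regions, and handle tangential configurations where $D_1\cap D_2$ is a single point. I would settle the sign first by evaluating $O$ on the line through $Z_1$ and $Z_2$, where everything becomes one-dimensional.
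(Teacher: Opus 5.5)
Your proposal is correct and follows the paper's route. It uses the same four-region decomposition and the piecewise formulas of the preceding Lemma, then explicitly compares the piecewise minimizers across $S_1$ and $S_2$, a step the paper leaves implicit when it says the theorem follows from the Lemma. There are only two cosmetic slips: on $D_1\setminus D_2$ the affine piece is $2(Z_1-Z_2)\cdot W+c$, the opposite of the sign your ``resp.'' assigns in the first paragraph (you do use the correct sign in (d)); and in (b) the reason $Y_0\notin D_2^o$ is simply $d_2\le L-d_1\le L/2$, which you do not even need, since $N_1$ is the unique nearest point to $Y_0$ of $\overline{D_1^c}\supseteq\overline{(D_1\cup D_2)^c}$ and already lies in the smaller set.
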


\begin{corollary}
In the two-dimensional case, the number of solutions is either 1 or 2:\\
$$\left\{\begin{array}{ccc}
|X|=1  &\mbox{  if and only if }& \norm{Z_2-Z_1}\ge d_1 + d_2  \mbox{ or }\norm{Z_2-Z_1}\le d_1 - d_2, \\
|X|=2 &\mbox{ if and only if }& d_1 - d_2 <\norm{Z_2-Z_1}< d_1 + d_2 . 
\end{array}\right.$$
\end{corollary}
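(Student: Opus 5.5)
We read the corollary off Theorem~\ref{th:Meas2Psi2} by counting, for $n=2$, how many points are in the minimizer set in each of its four cases. In cases (a), (b) and (d) the minimizer is one explicitly given point, so $|X|=1$: in (a) it is $Y_0=\frac{Z_1+Z_2}2$, and in (b) and (d) it is $N_1=Z_1+d_1\frac{Z_2-Z_1}{\norm{Z_2-Z_1}}$. In case (c), $d_1-d_2<\norm{Z_2-Z_1}<d_1+d_2$. In the plane, two circles whose center distance lies strictly between the difference and the sum of their radii meet transversally in exactly two points, symmetric about the line $Z_1Z_2$. Hence $|X|=|S_{12}|=2$. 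This is the standard two-circle intersection fact already quoted before the lemma ($|S_{12}|=1,2$ for $n=2$). We would justify it quickly by placing $Z_1$ at the origin and $Z_2$ on the $x$-axis, solving for $x$, and checking that $d_1^2-x^2>0$ exactly under the strict triangle inequalities.

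Next we check that the hypotheses of the theorem's cases cover all configurations, assuming $d_1\ge d_2$ and $Z_1\neq Z_2$. Write $\delta=\norm{Z_2-Z_1}$. If $\delta\le d_1-d_2$ we are in (d). If $d_1-d_2<\delta<d_1+d_2$ we are in (c). If $\delta\ge d_1+d_2$, then either $d_1\le \delta/2$, which is (a), or $\delta/2<d_1\le \delta-d_2<\delta$, which is (b) (when $d_2=0$ and $d_1=\delta$, the boundary also reduces to $X=N_1$). So the regimes $\delta\ge d_1+d_2$ and $\delta\le d_1-d_2$ give one point, and the open middle regime gives two. The three conditions in the corollary partition all configurations, so both implications become ``if and only if''.

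The main obstacle is the boundary equalities $\delta=d_1+d_2$ and $\delta=d_1-d_2$, where the circles are tangent. There $S_{12}$ is a single point that coincides with $N_1$ (and with $Y_0$ when $d_1=d_2=\delta/2$). We must confirm that the overlapping cases of the theorem assign the same single point, so that the count $|X|=1$ is consistent. We would check this directly from the formula for $N_1$: tangency places $N_1$ on $S_2$. The degenerate case $Z_1=Z_2$ makes $N_{\pm1}$ undefined and has to be excluded tacitly, as in the theorem.
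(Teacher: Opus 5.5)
Your proposal is correct and follows the paper's route: the corollary is read off from Theorem~\ref{th:Meas2Psi2}, where cases (a), (b) and (d) each give a single point and case (c) gives $X=S_{12}$, which has exactly two points in the plane. Your insistence on $Z_1\neq Z_2$ is warranted. For $Z_1=Z_2$ and $d_1>0$ the minimizer set is the annulus $d_2\le\norm{W-Z_1}\le d_1$ (a circle if $d_1=d_2$), so the statement genuinely fails there rather than merely having $N_1$ undefined.
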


\begin{corollary}
In the three-dimensional case, the number of solutions is either 1 or $\infty$:\\
$$\left\{\begin{array}{ccc}
|X|=1  &\mbox{  if and only if }& \norm{Z_2-Z_1}\ge d_1 + d_2  \mbox{ or }\norm{Z_2-Z_1}\le d_1 - d_2, \\
|X|=\infty &\mbox{ if and only if }& d_1 - d_2 <\norm{Z_2-Z_1}< d_1 + d_2 . 
\end{array}\right.$$
\end{corollary}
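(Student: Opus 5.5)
The plan is to read this corollary off Theorem~\ref{th:Meas2Psi2}. The only extra ingredient is the count of points in $S_{12}=S_1\cap S_2$ when $n=3$. Set $L=\norm{Z_2-Z_1}$ and keep the convention $d_1\ge d_2$. First I check that the four hypotheses of Theorem~\ref{th:Meas2Psi2} cover every configuration, up to the boundary cases. Either $L\ge d_1+d_2$, or $L\le d_1-d_2$, or $d_1-d_2<L<d_1+d_2$. In the first alternative, $d_1\le L/2$ falls under (a) with $X=Y_0$, and $L/2\le d_1<L$ falls under (b) with $X=N_1$. Here $d_1\ge L$ cannot occur together with $L\ge d_1+d_2$ unless $d_2=0$ and $d_1=L$, which is also the case $L=d_1-d_2$ handled by (d). The second alternative is (d) with $X=N_1$, and the third is (c) with $X=S_{12}$.

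\textbf{The $|X|=1$ case.} When $L\ge d_1+d_2$ or $L\le d_1-d_2$, the solution set is either $Y_0$, which is a single point (already noted: $|Y_0|=1$), or $N_1$, which is a single point by its definition. So $|X|=1$.

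\textbf{The $|X|=\infty$ case.} When $d_1-d_2<L<d_1+d_2$, we have $X=S_{12}$, so I need to show that $S_{12}$ is infinite in $\mathbb R^3$. The strict triangle inequalities $|d_1-d_2|<L<d_1+d_2$ force $L>0$, $d_2>0$, and $d_1>0$. Subtracting the two sphere equations $\norm{W-Z_i}^2=d_i^2$ gives a linear equation, so $S_{12}$ lies in a plane $P$ orthogonal to $Z_2-Z_1$. The foot point on the axis $\overline{Z_1Z_2}$ is at signed distance $a=(L^2+d_1^2-d_2^2)/(2L)$ from $Z_1$. Hence $S_{12}$ is the circle in $P$ centred at $Z_1+a\,(Z_2-Z_1)/L$ with radius $r=\sqrt{d_1^2-a^2}$. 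Positivity of $r$ is equivalent to $|a|<d_1$. That in turn is $(d_1-L)^2<d_2^2<(d_1+L)^2$, which is exactly the strict triangle condition. A circle of positive radius in a plane of $\mathbb R^3$ has infinitely many points, so $|X|=\infty$.

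\textbf{The "if and only if".} The three alternatives are mutually exclusive and exhaustive, and they yield $|X|=1$ and $|X|=\infty$ respectively. Since $1\ne\infty$, each implication reverses. This also establishes the dichotomy "either 1 or $\infty$".

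The main obstacle is not the counting but the boundary configurations of Theorem~\ref{th:Meas2Psi2}. These are the cases $L=d_1+d_2$ (the spheres are tangent externally) and $L=d_1-d_2$ (tangent internally). I must make sure the theorem's cases (b) and (d) really cover them, so that the solution is a single point and not $S_{12}$. At these equalities $S_{12}$ is a single tangency point, which coincides with $N_1$, so the answer is consistent either way. I would verify this coincidence directly: for $L=d_1+d_2$ the tangency point is $Z_1+d_1(Z_2-Z_1)/L=N_1$, and for $L=d_1-d_2$ it is again $N_1$. This settles the edge cases.
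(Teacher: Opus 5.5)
Your proposal is correct and follows the paper's route. The paper reads the corollary off Theorem~\ref{th:Meas2Psi2}, with cases (a), (b), (d) giving the single points $Y_0$ or $N_1$ and case (c) giving $X=S_{12}$; the paper merely asserts that $S_{12}$ is infinite for $n=3$, and you justify this via the circle radius $\sqrt{d_1^2-a^2}>0$ and by checking that the tangency point equals $N_1$ at $\norm{Z_2-Z_1}=d_1\pm d_2$. One caveat, shared with the paper: you tacitly need $Z_1\neq Z_2$ for $N_1$ to be defined, since for $Z_1=Z_2$ and $d_1>0$ one gets $X=\{W:\ d_2\le\norm{W-Z_1}\le d_1\}$, which is infinite even though $\norm{Z_2-Z_1}=0\le d_1-d_2$.
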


\subsection{The distance error case  ($\psi(y)=y$)}
Note that $Y_0= \overline{Z_1 Z_2} \setminus (D_1 \cup D_2)$ and $|Y_0|=0,1,\infty$. 
\begin{lem}
Under the condition that the following sets are nonempty  in $\mathbb R^n, n=2,3$, we have\\
(a) The minimum in $(D_1 \cup D_2)^c$ is the nearest point to $\overline{Z_1 Z_2}$ as follows :
$$\left\{\begin{array}{ccc}
Y_0      &\mbox{ if }&  d_1 + d_2 \le \norm{Z_1 - Z_2},\\
S_{12} &\mbox{ if }&  d_1 - d_2 < \norm{Z_1 - Z_2} < d_1+d_2,\\
N_1     &\mbox{ if }&  d_1 - d_2  \ge \norm{Z_1-Z_2}.\\
\end{array}\right.$$
(b) The minimum in $D_1\cap D_2$ is the farest point from $\overline{Z_1 Z_2}$ as follows:
$$\left\{\begin{array}{ccc}
S_{12}&\mbox{ if }&  d_1 - d_2 < \norm{Z_1 - Z_2} < d_1+d_2,\\
N_{-2} &\mbox{ if }& d_1 - d_2  \ge \norm{Z_1-Z_2}.\\
\end{array}\right.$$
\\
(c) The minimum in $D_1\setminus D_2$ is  as follows:
$$\left\{\begin{array}{ccc}
N_1  &\mbox{ if }&d_1 + d_2 \le \norm{Z_1 - Z_2},\\
S_{12}&\mbox{ if }&  d_1 - d_2 < \norm{Z_1 - Z_2} < d_1+d_2,\\
\overline{N_{-2} N_1} &\mbox{ if }& d_1 - d_2  \ge \norm{Z_1-Z_2}.\\
\end{array}\right.$$
\end{lem}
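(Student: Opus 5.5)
Write $r_i=\norm{W-Z_i}$ and $L=\norm{Z_1-Z_2}$, and assume $d_1\ge d_2$ as before. In the squared-distance lemma the objective was quadratic on each region; here it is linear in $(r_1,r_2)$, and the whole argument turns on the triangle inequality $r_1+r_2\ge L$ and $|r_1-r_2|\le L$. Equality in the first holds exactly on the segment $\overline{Z_1Z_2}$, and in the second exactly on the line through $Z_1,Z_2$ outside the segment.

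\textbf{Expressions on each region.} Removing the absolute values in $O$ region by region gives
$$O=r_1+r_2-(d_1+d_2)\ \mbox{ on } (D_1\cup D_2)^c,\qquad O=(d_1+d_2)-(r_1+r_2)\ \mbox{ on } D_1\cap D_2,$$
$$O=(d_1-d_2)+(r_2-r_1)\ \mbox{ on } D_1\setminus D_2 .$$
So on the exterior region we minimize $r_1+r_2$, on the intersection we maximize it, and on $D_1\setminus D_2$ we minimize $r_2-r_1$.

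\textbf{(a) Exterior region.}
\begin{itemize}
\item Here $O\ge L-(d_1+d_2)$, with equality exactly on $\overline{Z_1Z_2}$.
\item If $d_1+d_2\le L$, the segment meets the closure of the exterior region, giving the set $Y_0$.
\item Otherwise no point of the segment is admissible. On the closed region $r_1\ge d_1$, $r_2\ge d_2$, so $O\ge 0$, with equality exactly on $S_{12}$ when $S_{12}\ne\emptyset$.
\item If $d_1-d_2\ge L$, then $D_2\subset D_1$ and the constraint reduces to $r_1\ge d_1$. Minimizing $r_1+r_2$ on the sphere $S_1$ gives the point nearest $Z_2$, namely $N_1$, and moving outward only increases both distances.
\end{itemize}

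\textbf{(b) Intersection.}
\begin{itemize}
\item Here $O\ge 0$ since $r_i\le d_i$, with equality exactly on $S_{12}$, which settles the middle case.
\item If $D_2\subset D_1$, maximize $r_1+r_2$ over $D_2$. Write $W=Z_2+v$ with $\norm{v}\le d_2$. Then $r_1\le L+\norm{v}\le L+d_2$, with equality at $v=d_2(Z_2-Z_1)/L$, i.e. at $N_{-2}$ as the paper labels it. That point also maximizes $r_2=d_2$.
\end{itemize}

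\textbf{(c) $D_1\setminus D_2$.} We minimize $r_2-r_1$, using $r_2-r_1\ge -L$, with equality on the ray from $Z_1$ pointing away from $Z_2$.
\begin{itemize}
\item If $d_1+d_2\le L$, I would instead argue directly. For $W\in D_1$ we have $r_1\le d_1$ and $r_2\ge L-r_1$, so $r_2-r_1\ge L-2r_1\ge L-2d_1$. Equality requires $r_1=d_1$ with $W$ on the segment, which is the point $N_1$.
\item In the middle case, $O\ge 0$ on the closure of $D_1\setminus D_2$ because $r_1\le d_1$ and $r_2\ge d_2$. Equality holds on $S_{12}$.
\item If $D_2\subset D_1$, then $O=(d_1-d_2)+(r_2-r_1)\ge d_1-d_2-L\ge0$. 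Equality holds where $r_2-r_1=-L$, i.e. on the ray from $Z_1$ away from $Z_2$, cut off by $r_1\le d_1$ and $r_2\ge d_2$. That is the collinear segment between the two boundary points, identified as $\overline{N_{-2}N_1}$ in the paper's labelling.
\end{itemize}

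\textbf{Main obstacle.} The hard part is bookkeeping rather than analysis. I must check which case boundaries produce tangency ($|S_{12}|=1$) and confirm that each claimed minimizer lies in the closure of its region. I must also reconcile the sign conventions for $N_{\pm1}$ and $N_{\pm2}$ with the actual geometry. In particular, I expect to verify whether the labels in (b) and (c) should be read as $N_{\pm}$ pointing toward or away from the other centre, and to adjust the case $d_1-d_2\ge L$ accordingly. I also need to treat minima over non-closed regions as minima over their closures, as the paper implicitly does.
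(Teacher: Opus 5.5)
Your decomposition is the same as the paper's. The paper's proof only writes down the three piecewise expressions for $O$ and stops, and your triangle-inequality arguments for (a), (b), and the first two cases of (c) correctly supply what it leaves implicit.

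The step that fails is the last case of (c), $d_1-d_2\ge \norm{Z_1-Z_2}$. Equality in $r_2-r_1\ge -L$ means $r_1=r_2+L$, i.e. $Z_2$ lies on the segment from $Z_1$ to $W$. So $W$ lies on the ray starting at $Z_2$ in the direction $u=(Z_2-Z_1)/L$, beyond $Z_2$. On the ray you name, from $Z_1$ pointing away from $Z_2$, one has instead $r_2=r_1+L$. There $r_2-r_1=+L$ and $O=d_1-d_2+L$, which is the largest value $O$ takes on this region. Cutting your ray by $r_1\le d_1$ and $r_2\ge d_2$ gives a segment ending at $N_{-1}$, on the far side of $Z_1$, and that segment contains no minimizer. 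Your identification with $\overline{N_{-2}N_1}$ is therefore taken from the statement rather than derived.

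The repair is short and settles the labelling worry in your last paragraph. Write $W=Z_2+su$ with $s\ge 0$. Then $r_2=s\ge d_2$ and $r_1=L+s\le d_1$ give $s\in[d_2,\,d_1-L]$, which is nonempty exactly because $d_1-d_2\ge L$. The endpoints are:
\begin{itemize}
\item $Z_2+d_2u$, the same point you found in (b), namely the point of $S_2$ farthest from $Z_1$, i.e. $N_{-2}$;
\item $Z_2+(d_1-L)u=Z_1+d_1u$, which is $N_1$ by the paper's explicit definition.
\end{itemize}
So the paper's labels in (b) and (c) are consistent with each other and with the geometry, and no ``adjustment'' is needed. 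Relabelling to fit the ray you wrote would make (c) wrong. When $d_1-d_2=L$ the segment collapses to the tangency point $N_1=N_{-2}$, which matches the corollaries.
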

\begin{proof}
If $W\in (D_1 \cup D_2)^c$, we have
$$ O(W) = ||W-Z_1|| + ||W-Z_2|| - d_1 - d_2. $$
Thus, (a) is proven.

If $W\in D_1 \cap D_2$, we have
$$ O(W) = -||W-Z_1|| - ||W-Z_2|| + d_1 + d_2 . $$
Thus, (b) is proven.

If $W\in D_1 \setminus D_2$, we have
$$ O(W) =  ||W-Z_2|| - ||W-Z_1|| + d_1 - d_2. $$
Thus, (c) is proven.
\end{proof}

Based on this lemma, we can state the following theorem:
\begin{theorem}\label{th:Meas2Psi1}
For two measurements and distance error minimization with $n=2,3$, the followings minimums occur:\\
(a) If $d_1 + d_2 \le \norm{Z_2 - Z_1}$, then $X=Y_0$.\\
(b) If $ d_1 - d_2 \le \norm{Z_2-Z_1} \le  d_1 + d_2$, then $X=S_{12}$.\\
(c) If $d_1 - d_2 \ge \norm{Z_2-Z_1}$, then $X=\overline{N_{-2} N_1}$.
\end{theorem}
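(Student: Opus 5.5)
The plan is to reduce the global minimization to comparing the regional minima from the preceding lemma, and to identify the global minimum value directly through the triangle inequality. Set $L=\norm{Z_2-Z_1}$ and keep the normalization $d_1\ge d_2$. Write $O(W)=\bigl|\,\norm{W-Z_1}-d_1\bigr|+\bigl|\,\norm{W-Z_2}-d_2\bigr|$.

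The key tool is a universal lower bound. For any $W$, put $r_i=\norm{W-Z_i}$. The triangle inequality gives $|r_1-r_2|\le L\le r_1+r_2$. I will then treat the three regimes separately.

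\emph{Regime (a), $d_1+d_2\le L$.} Here
$$O(W)\ge (d_1-r_1)+(d_2-r_2)+\bigl((r_1+r_2)-L\bigr)_{\ge 0}\cdots$$
A cleaner route uses the reverse triangle inequality for absolute values:
$$O(W)\ge \bigl|(r_1-d_1)+(r_2-d_2)\bigr|\ge (r_1+r_2)-(d_1+d_2)\ge L-d_1-d_2 .$$
Equality requires two things simultaneously.
\begin{itemize}
\item $r_1+r_2=L$, meaning $W\in\overline{Z_1Z_2}$.
\item $r_i\ge d_i$ for both $i$, meaning $W\notin D_1^o\cup D_2^o$.
\end{itemize}
These conditions together say exactly that $W$ lies on the segment outside both open balls, which is $Y_0$ (read with open balls, consistent with Lemma (a)).

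\emph{Regime (b), $d_1-d_2\le L\le d_1+d_2$.} Here $O\ge0$, and $O=0$ exactly on $S_1\cap S_2=S_{12}$. This set is nonempty precisely because the two circles or spheres intersect under these inequalities. Hence $X=S_{12}$ directly. At the boundary values of $L$, this set is a single tangency point, which agrees with (a) and (c).

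\emph{Regime (c), $d_1-d_2\ge L$.} Here $D_2\subset D_1$ and the spheres are nested. I will use
$$O(W)\ge \bigl|(d_1-r_1)-(d_2-r_2)\bigr|\ge (d_1-d_2)-(r_1-r_2)\ge d_1-d_2-L .$$
Equality requires two things.
\begin{itemize}
\item $r_1-r_2=L$, meaning $W$ lies on the ray from $Z_1$ through $Z_2$, at or beyond $Z_2$.
\item The two signed terms have compatible signs, namely $r_1\le d_1$ and $r_2\ge d_2$.
\end{itemize}
On that ray, the condition $r_2\ge d_2$ with $r_1\le d_1$ cuts out the segment from the far point of $S_2$, which is $N_{-2}$ in the paper's notation, to $N_1$ on $S_1$. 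So $X=\overline{N_{-2}N_1}$.

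The main obstacle is this last equality characterization, in particular matching signs and endpoints with the paper's $N_{\pm}$ conventions. A sign slip would produce the wrong segment, for instance one extending toward $Z_1$. I will verify it with a one-dimensional parametrization $W=Z_1+t\,u$, where $u=(Z_2-Z_1)/L$. Equality then forces $t\in[L+d_2,\,d_1]$, which is nonempty exactly when $d_1-d_2\ge L$. As a consistency check, the boundary cases where two regimes overlap should reduce to the same single point, matching Lemma (a)–(c).
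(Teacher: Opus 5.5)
Your proof is correct, but it is organized differently from the paper's. The paper splits $\mathbb R^n$ into $(D_1\cup D_2)^c$, $D_1\cap D_2$, $D_1\setminus D_2$ (and implicitly $D_2\setminus D_1$). On each of these, $O$ is $\pm\norm{W-Z_1}\pm\norm{W-Z_2}$ plus a constant, and the preceding lemma locates the regional minimizers geometrically. The paper then reads off the theorem by comparing those regional minima, a comparison it never writes out. You bypass the decomposition. You write $O=|a|+|b|$ with $(a,b)=(r_1-d_1,\,r_2-d_2)$ or $(d_1-r_1,\,r_2-d_2)$, and use $|a|+|b|\ge a+b$ (equality iff $a,b\ge 0$) together with $|r_1-r_2|\le L\le r_1+r_2$. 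This gives the global bound $O\ge\max\{L-d_1-d_2,\,0,\,d_1-d_2-L\}$ for every $W$. The equality analysis then produces the full minimizer set, closed endpoints and tangency cases included, with no cross-region bookkeeping. The paper's route gives more qualitative information: how $O$ behaves on each region, in parallel with its treatment of $\psi(y)=y^2$. Yours is a short, self-contained certificate of global optimality.

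A few minor points:
\begin{itemize}
\item The inequality you use is the ordinary triangle inequality (really just $|a|\ge a$), not the reverse one.
\item The abandoned first display in regime (a) should be removed.
\item You are right that $Y_0$ must be read with open balls, i.e.\ as the closed segment $\overline{N_1N_2}$ with $N_2=Z_2+d_2\frac{Z_1-Z_2}{L}$. With the literal closed $D_i$, the endpoints are lost and $Y_0$ is empty when $L=d_1+d_2$.
\item Everything tacitly assumes $Z_1\neq Z_2$, as the definition of $N_{\pm1}$ already does.
\end{itemize}
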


\begin{corollary}
In the two-dimensional case, the number of solutions can be 1,2, or $\infty$:\\
$$\left\{\begin{array}{ccc}
|X|=1  &\mbox{  if and only if }& \norm{Z_2-Z_1}= d_1 + d_2  \mbox{ or }d_1 - d_2, \\
|X|=2  &\mbox{  if and only if }& d_1 - d_2 < \norm{Z_2-Z_1} < d_1 + d_2, \\
|X|=\infty &\mbox{ if and only if }& \norm{Z_2-Z_1}> d_1 + d_2 \mbox{ or } \norm{Z_2-Z_1} <d_1 - d_2. 
\end{array}\right.$$.
\end{corollary}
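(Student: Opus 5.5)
The corollary follows by reading off the cardinality of the minimizing set $X$ in each of the three regimes of Theorem \ref{th:Meas2Psi1}, specialized to $n=2$. For each regime I would identify $X$ geometrically and count its points. I would also check that the three conditions in the corollary are mutually exclusive and exhaust all configurations with $d_1\ge d_2$, so that each ``if'' upgrades to ``if and only if''. Throughout, write $L=\norm{Z_2-Z_1}$ and $u=(Z_2-Z_1)/L$.

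\emph{Regime (a), $d_1+d_2\le L$.} Here $X=Y_0=\overline{Z_1Z_2}\setminus(D_1\cup D_2)$. Parametrize the segment as $Z_1+tu$ with $t\in[0,L]$. Removing $D_1$ and $D_2$ leaves exactly the parameters $t\in[d_1,\,L-d_2]$, since the open balls are excluded and the boundary points are kept, consistent with the lemma's use of $Y_0$ as the minimizer of $\norm{W-Z_1}+\norm{W-Z_2}$ over the closure. This interval is a single point when $L=d_1+d_2$, namely the tangency point $N_1=N_{-2}$, so $|X|=1$. It is a nondegenerate segment when $L>d_1+d_2$, so $|X|=\infty$.

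\emph{Regime (b), $d_1-d_2\le L\le d_1+d_2$.} Here $X=S_{12}=S_1\cap S_2$. This is the intersection of two circles in the plane with center distance $L$ and radii $d_1,d_2$. I would invoke the elementary fact that it consists of exactly two points when $|d_1-d_2|<L<d_1+d_2$, and exactly one point (internal or external tangency) when $L=d_1\pm d_2$. This gives $|X|=2$ on the open range and $|X|=1$ at the endpoints. The latter matches the ``$L=d_1+d_2$ or $d_1-d_2$'' clause of the first line.

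\emph{Regime (c), $L<d_1-d_2$.} Here $X=\overline{N_{-2}N_1}$. A direct computation gives
\[
N_1-N_{-2}=(Z_1-Z_2)+(d_1+d_2)u=(d_1+d_2-L)\,u .
\]
Since $L<d_1-d_2\le d_1+d_2$, this has positive length, so $|X|=\infty$.

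The step I expect to be the main obstacle is the \emph{boundary case} $L=d_1-d_2$. Theorem \ref{th:Meas2Psi1} lists it under both (b) and (c). Formula (c) would give a segment of length $2d_2>0$, hence $|X|=\infty$ whenever $d_2>0$, while (b) gives the single tangency point. I would resolve this by evaluating $O$ directly on $\overline{N_{-2}N_1}$ at internal tangency and comparing with $O$ at the tangency point, where $O=0$. The aim is to show that $O>0$ at the other points of the segment, so that $X$ reduces to the tangency point, consistent with the corollary. If that comparison fails, the statement at this boundary would need $d_2=0$ or a corrected reading of case (c).

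Finally, I would note that the conditions $L=d_1\pm d_2$, $d_1-d_2<L<d_1+d_2$, and $\{L>d_1+d_2\}\cup\{L<d_1-d_2\}$ partition $[0,\infty)$. The three counts $1,2,\infty$ are therefore characterized exactly.
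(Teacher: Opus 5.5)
Your route (read $X$ off Theorem~\ref{th:Meas2Psi1} regime by regime and count the points) is the paper's, and your regimes (a) and (b) are fine. Regime (c), however, is built on a misidentified point.

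The paper's $N_{-2}$ is $Z_2+d_2u$, not $Z_2-d_2u$. Swapping the roles of $1$ and $2$ in $N_{\pm1}=Z_1\pm d_1\frac{Z_2-Z_1}{\norm{Z_2-Z_1}}$ gives $N_{\pm2}=Z_2\pm d_2\frac{Z_1-Z_2}{\norm{Z_1-Z_2}}$. So $N_{-2}$ is the point of $S_2$ farthest from $Z_1$. This is exactly what part (b) of the lemma requires: when $L\le d_1-d_2$, it is the unique maximizer of $\norm{W-Z_1}+\norm{W-Z_2}$ over $D_1\cap D_2=D_2$. Hence
\[
N_1-N_{-2}=(d_1-d_2-L)\,u ,
\]
not $(d_1+d_2-L)\,u$.

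Under your reading, the segment $\overline{N_{-2}N_1}$ would pass through $Z_2$. There $O(Z_2)=d_1+d_2-L$, which exceeds the true minimum $d_1-d_2-L$ whenever $d_2>0$. So that version of case (c) is false, and your count $|X|=\infty$ for $L<d_1-d_2$ is right only by coincidence. With the correct $N_{-2}$, the segment has positive length exactly when $L<d_1-d_2$ and collapses to the internal tangency point when $L=d_1-d_2$. Cases (b) and (c) therefore agree on their overlap. The ``main obstacle'' you single out does not exist, and the ``corrected reading of case (c)'' you allude to is simply the correct $N_{-2}$.

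The boundary case also needs no comparison along a segment. Since $O\ge 0$ and $O(W)=0$ iff $W\in S_1\cap S_2$, we get $X=S_{12}$ whenever $S_{12}\neq\emptyset$. This settles both tangencies and the two-point regime at once, so you should drop the hedge about $d_2=0$.

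If you want the whole corollary without depending on the paper's loose notation, use two direct bounds. (With closed $D_i$, $\overline{Z_1Z_2}\setminus(D_1\cup D_2)$ is literally empty at $L=d_1+d_2$, which is why you had to reinterpret $Y_0$.)
\[
O(W)\ge \norm{W-Z_1}+\norm{W-Z_2}-d_1-d_2\ge L-d_1-d_2,\qquad
O(W)\ge d_1-d_2-\bigl(\norm{W-Z_1}-\norm{W-Z_2}\bigr)\ge d_1-d_2-L .
\]
For $L\ge d_1+d_2$, resp.\ $L\le d_1-d_2$, these bounds are attained exactly on $\{Z_1+tu:\ d_1\le t\le L-d_2\}$, resp.\ $\{Z_1+tu:\ L+d_2\le t\le d_1\}$. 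These are segments of lengths $L-d_1-d_2$ and $d_1-d_2-L$, which vanish exactly at the two tangencies.

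Finally, your partition should be of $(0,\infty)$, not $[0,\infty)$. The condition $Z_1\neq Z_2$ is implicit, since your $u$ requires it. For $Z_1=Z_2$ and $d_1=d_2$ one has $L=d_1-d_2$, yet $X=S_1$ is infinite.
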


\begin{corollary}
In the three-dimensional case, the number of solutions can be 1 or $\infty$:\\
$$\left\{\begin{array}{ccc}
|X|=1  &\mbox{  if and only if }& \norm{Z_2-Z_1}= d_1 + d_2  \mbox{ or }d_1 - d_2, \\
|X|=\infty &\mbox{ if and only if }& \norm{Z_2-Z_1} \neq  d_1 + d_2 \mbox{ and } d_1 - d_2. 
\end{array}\right.$$.
\end{corollary}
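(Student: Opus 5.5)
The plan is to read the corollary directly off Theorem~\ref{th:Meas2Psi1}. Put $L=\norm{Z_2-Z_1}$ and $u=(Z_2-Z_1)/L$, and recall the standing assumption $d_1\ge d_2$. The three cases of that theorem cover all values of $L$. For each case I will count the points of the set $X$ in $\mathbb R^3$, paying special attention to the boundary values $L=d_1+d_2$ and $L=d_1-d_2$, where two cases of the theorem overlap. There I must check that the two descriptions of $X$ agree, so that the count is well defined.

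\textbf{Case $L>d_1+d_2$.} By part (a) of the theorem, $X=Y_0=\overline{Z_1Z_2}\setminus(D_1\cup D_2)$. Parametrize the segment as $Z_1+tu$ with $t\in[0,L]$. The removed pieces are $t\le d_1$ and $t\ge L-d_2$. What remains is the open segment $d_1<t<L-d_2$, which is nonempty with positive length because $L-d_2-d_1>0$. Hence $|X|=\infty$.

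\textbf{Case $d_1-d_2<L<d_1+d_2$.} By part (b), $X=S_{12}$, the intersection of two spheres that meet transversally. Its points are equidistant from $Z_1$ and $Z_2$ by the radii, so it is a circle of positive radius in the plane orthogonal to $u$. Again $|X|=\infty$. This is where three dimensions differs from the planar corollary, in which $S_{12}$ has only two points.

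\textbf{Boundary values.} If $L=d_1+d_2$, the spheres are externally tangent. Then $S_{12}=\{Z_1+d_1u\}$, and $Y_0$ degenerates to this same single point, so (a) and (b) agree and $|X|=1$. If $L=d_1-d_2$, the spheres are internally tangent and $S_{12}=\{Z_1+d_1u\}=\{N_1\}$. To see that (c) gives the same set, I will check that the minimizing set is a single point rather than rely on the notation $\overline{N_{-2}N_1}$.
\begin{itemize}
\item On $D_1\setminus D_2$ we have $O(W)=\norm{W-Z_2}-\norm{W-Z_1}+d_1-d_2$. By the triangle inequality, $\norm{W-Z_2}-\norm{W-Z_1}\ge -L$, with equality exactly on the ray $\{Z_1+tu:\ t\ge L\}$.
\item Intersecting this ray with $D_1\setminus D_2$ forces $L+d_2\le t\le d_1$.
\item When $L=d_1-d_2$ this interval collapses to the single point $t=d_1$, namely $N_1$.
\end{itemize}
So $|X|=1$ here as well.

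\textbf{Case $L<d_1-d_2$.} The same ray argument gives a minimizing set $\{Z_1+tu:\ L+d_2\le t\le d_1\}$ on $D_1\setminus D_2$. This is a segment of length $d_1-d_2-L>0$, consistent with part (c) of the theorem, so $|X|=\infty$.

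\textbf{Main obstacle.} The delicate step is the boundary bookkeeping just described. Theorem~\ref{th:Meas2Psi1} uses overlapping non-strict inequalities, and the segment notation in part (c) could suggest a set of positive length even at $L=d_1-d_2$. I will therefore verify directly from the triangle-inequality equality condition that the segment degenerates to a point exactly at internal tangency. The same check is needed for $Y_0$ at external tangency. Once this is done, the ``if and only if'' statements follow, because the cases $L=d_1\pm d_2$ give one point and every other value of $L$ gives an infinite set.
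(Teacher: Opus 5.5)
Your proposal is correct and follows the paper's route. The paper gives no separate argument and reads the count directly off Theorem~\ref{th:Meas2Psi1}, so your boundary bookkeeping is a useful addition. One fix: with closed balls, $Y_0=\overline{Z_1Z_2}\setminus(D_1\cup D_2)$ is empty at $L=d_1+d_2$, so (a) and (b) do not literally agree there, and for $L>d_1+d_2$ the true minimizer set is the closed segment $\{Z_1+tu:\ d_1\le t\le L-d_2\}$. At both tangencies the clean justification is that $O\ge 0$, with equality exactly on $S_{12}$, which is a single point provided $Z_1\neq Z_2$. You use that assumption implicitly through $u$, and it is genuinely needed, since $Z_1=Z_2$ with $d_1=d_2>0$ gives $X=S_1$.
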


\subsection{The comparison}
Summarizing the results in Subsection 3.1 and 3.2, the outcomes are compiled in Table \ref{tab:Meas2}. The corresponding figures for each case listed in 
Table 1 are displayed in Fig. \ref{fig:Meas2Dim2} and Fig. \ref{fig:Meas2Dim3}, representing the two- and three-dimensional cases, respectively. 
The corrspondence between these figures and the cases in Table 1 is outlined in Table \ref{tab:Meas2_1}. 
 It can be observed that the numerical results shown in the figures 
align with the theoretical results in the aforementioned subsections.

In Fig.\ref{fig:Meas2Dim2}, which depicts the two dimensional case, the level sets of the objective functions 
are visualized using the `contour' function in Matlab. Here, the minimum value of the objective function is denoted by $m$. 
The blue contours represent level sets ranging from $m$ to $m+3$ with an interval of $0.1$, while 
the cyan countours depict level sets from $m+3$ to $m+50$ with an interval of $1$. The red markers indicate the theoretical minimum points for each case.
For the squared distance case shown in Fig. \ref{fig:Meas2Dim2} (b),(d),(f), and (h), the level sets are vertically confined within two disks that do not intersect with the other disk.
In contrast, for the distance error case, illustrated in Fig. \ref{fig:Meas2Dim2} (a),(c), and (g), the level sets are more horizontally oriented near the solutions. 

The three dimensional results are presented in Fig. {fig:Meas2Dim3} under similar conditions on $Z_1,Z_2, d_1$ and $d_2$. Two spheres are depicted with  black and green dots, while the level sets 
shown on the planes $x=-1, x=1, y=0.1,$ and $z=-0.1$. The gradient directions on grid points, with a step size of $0.5$ in all three axis directions, are illustrated using Matlab's `quiver' command. Streamlines are also plotted in these figures. In the squared distance error cases (Fig. \ref{fig:Meas2Dim3} (b),(d),(f), and (g)), the streamlines inside two balls that are not contained within the other ball run parallel to the $x$-axis, where the centers of the two balls lie. On the other hand, in the distance error cases (Fig. \ref{fig:Meas2Dim3} (a), (c), (e), and (f)), the streamlines inside the black ball, which is not contained within the green ball, converges toward the negative $x$-axis direction.   

\begin{table}
\begin{centering}
\begin{tabular}{|c||c|c|c|c|c|c|}
\hline
                                                                                                                                   &$\psi(y)=y$                        &$\psi(y)=y$      &$\psi(y)=y$         &$\psi(y)=y^2$   &$\psi(y)=y^2$ &$\psi(y)=y^2$\\
  Case                                                                                                                         &Solution $X$                     &$|X|$ : 2D         &$|X|$:3D            &Solution $X$     &$|X|$ : 2D      &$|X|$:3D\\ 
\hline\hline
$d_1< \frac{\norm{Z_1 Z_2}}2$                                                                                &$Y_0$                              &$\infty$            &$\infty$              &$Y_0$               &1                   &1\\
\hline 
$\frac{\norm{Z_1 Z_2}}2\le d_1 <\norm{Z_1 Z_2},$                                                &                                        &                         &                        &                           &                    & \\ 
$d_1+d_2\le\norm{Z_1 Z_2}$                                                                                   &$Y_0$                              &$\infty$ (*1)      &$\infty$ (*1)       &$N_1$               &1                  &1\\
\hline
$d_1 - d_2  < \norm{Z_1 Z_2} < d_1 + d_2$                                                             &$S_{12}$                          &2                      &$\infty$              &$S_{12}$          &2                  &$\infty$\\
\hline 
$d_1 - d_2 \ge \norm{Z_1 Z_2}$                                                                               &$\overline{N_{-2} N_1}$  &$\infty$ (*1)      &$\infty$ (*1)       &$N_1$               &1                 &1\\
\hline                                                                       
\end{tabular}
\caption{The solution location and numbers in 2 measurements. \newline 
*1: The number of $X$ is 1 only for the case $\norm{Z_1 Z_2}=d_1 + d_2$ or $d_1 - d_2$.  }
\label{tab:Meas2}
\end{centering}
\end{table}

\begin{table}
\begin{centering}
\begin{tabular}{|c||c|c|c|c|}
\hline
                                                                                                                                   &$\psi(y)=y$       &$\psi(y)=y$                        &$\psi(y)=y^2$     &$\psi(y)=y^2$\\
  Case                                                                                                                         &$|X|$ : 2D         &$|X|$:3D                           &$|X|$ : 2D           &$|X|$:3D\\ 
\hline\hline
$d_1< \frac{\norm{Z_1 Z_2}}2$                                                                                &Fig.2(a)             &Fig.3(a)                             &Fig.2(b)              &Fig.3(b)\\
\hline 
$\frac{\norm{Z_1 Z_2}}2\le d_1 <\norm{Z_1 Z_2},d_1+d_2\le\norm{Z_1 Z_2}$    &Fig.2(c)             &Fig.3(c)                             &Fig.2(d)              &Fig.3(d)\\ 
\hline 
$d_1 - d_2  < \norm{Z_1 Z_2} < d_1 + d_2$                                                             &Fig.2(e)             &Fig.3(e)                             &Fig.2(f)               &Fig.3(f)\\
\hline 
$d_1 - d_2 \ge \norm{Z_1 Z_2}$                                                                               &Fig.2(g)             &Fig.3(g)                             &Fig.2(h)             &Fig.3(h)\\
\hline                                                                       
\end{tabular}
\caption{The figures in Fig. 2 and Fig. 3 corresponding to each cases in Table 1}
\label{tab:Meas2_1}
\end{centering}
\end{table}

\begin{figure}
\begin{minipage}[t]{8cm}
\centerline{\epsfig{file=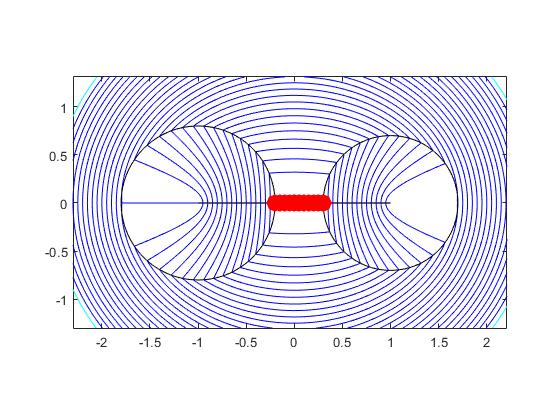, height=4cm,width=5.5cm,clip=1cm}}
\end{minipage}
\begin{minipage}[t]{8cm}
\centerline{\epsfig{file=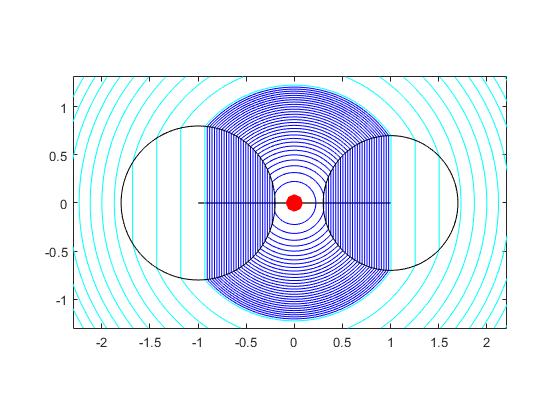, height=4cm,width=5.5cm,clip=1cm}}
\end{minipage}
\begin{center}
(a)\qquad\quad\qquad\qquad\qquad\qquad\qquad\qquad\qquad\qquad\qquad(b)
\end{center}
\begin{minipage}[t]{8cm}
\centerline{\epsfig{file=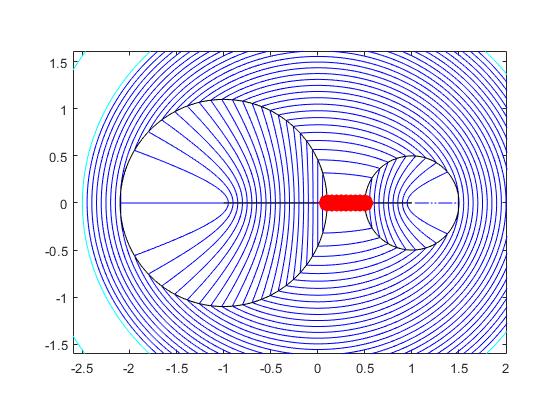, height=4cm,width=5.5cm,clip=1cm}}
\end{minipage}
\begin{minipage}[t]{8cm}
\centerline{\epsfig{file=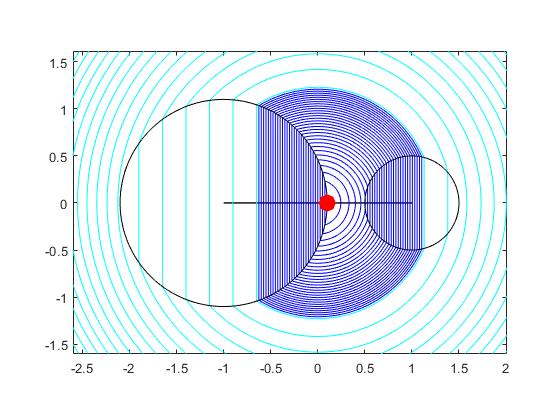, height=4cm,width=5.5cm,clip=1cm}}
\end{minipage}
\begin{center}
(c)\qquad\quad\qquad\qquad\qquad\qquad\qquad\qquad\qquad\qquad\qquad(d)
\end{center}
\begin{minipage}[t]{8cm}
\centerline{\epsfig{file=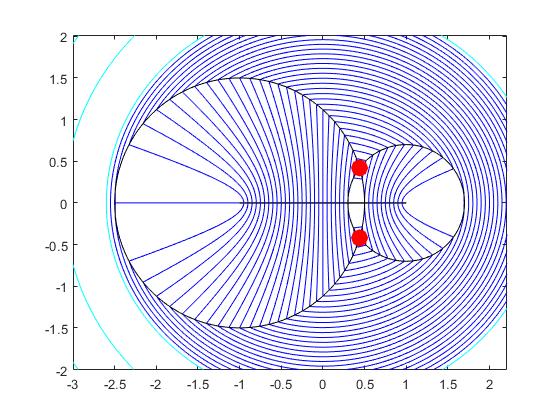, height=4cm,width=5.5cm,clip=1cm}}
\end{minipage}
\begin{minipage}[t]{8cm}
\centerline{\epsfig{file=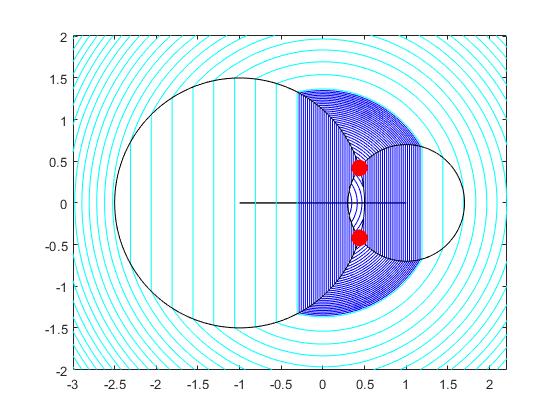, height=4cm,width=5.5cm,clip=1cm}}
\end{minipage}
\begin{center}
(e)\qquad\quad\qquad\qquad\qquad\qquad\qquad\qquad\qquad\qquad\qquad(f)
\end{center}
\begin{minipage}[t]{8cm}
\centerline{\epsfig{file=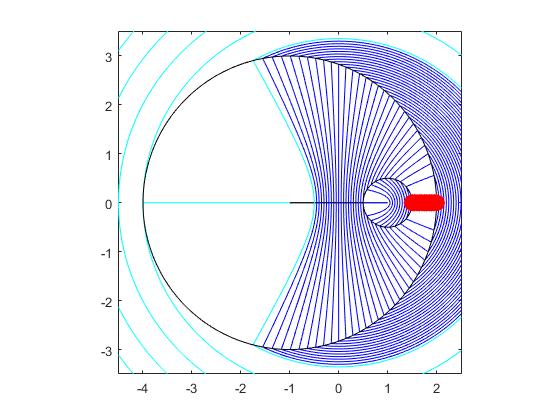, height=4cm,width=5.5cm,clip=1cm}}
\end{minipage}
\begin{minipage}[t]{8cm}
\centerline{\epsfig{file=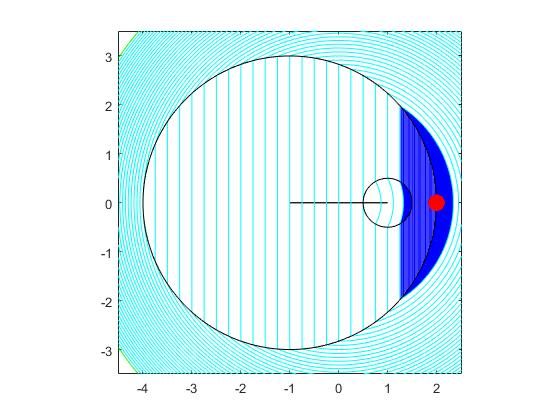, height=4cm,width=5.5cm,clip=1cm}}
\end{minipage}
\begin{center}
(g)\qquad\quad\qquad\qquad\qquad\qquad\qquad\qquad\qquad\qquad\qquad(h)
\end{center}
\caption{The location of the two-diemsional sources for (a,c,e,g) $\psi(y)=y$  and (b,d,f,h) $\psi(y)=y^2$ when 
(a,b) $d_1\le \frac{\norm{Z_1 Z_2}}2$ (c,d) $\frac{\norm{Z_1 Z_2}}2\le d_1 <\norm{Z_1 Z_2}$ and $d_1+d_2\le\norm{Z_1 Z_2}$ 
(e,f) $d_1 - d_2  < \norm{Z_1 Z_2} < d_1 + d_2$ (g,h) $d_1 - d_2 \ge \norm{Z_1 Z_2}$}
\label{fig:Meas2Dim2} 
\end{figure}

\begin{figure}
\begin{minipage}[t]{8cm}
\centerline{\epsfig{file=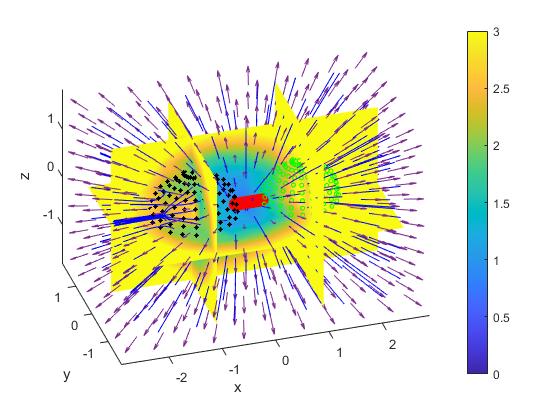, height=3.5cm,width=5cm,clip=1cm}}
\end{minipage}
\begin{minipage}[t]{8cm}
\centerline{\epsfig{file=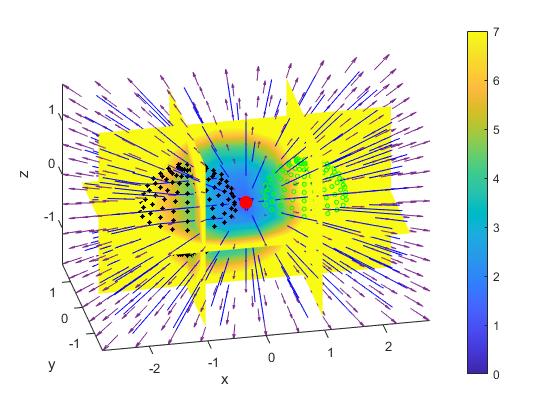, height=3.5cm,width=5cm,clip=1cm}}
\end{minipage}
\begin{center}
(a)\qquad\quad\qquad\qquad\qquad\qquad\qquad\qquad\qquad\qquad\qquad(b)
\end{center}
\begin{minipage}[t]{8cm}
\centerline{\epsfig{file=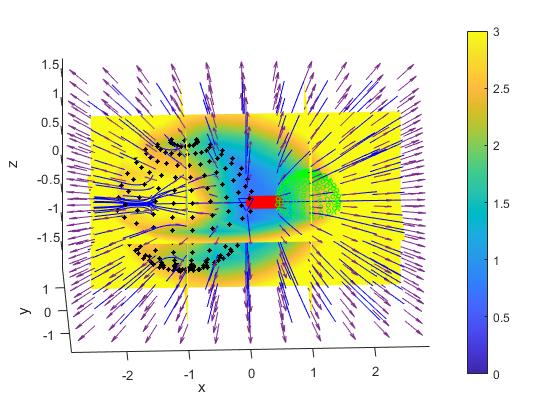, height=3.5cm,width=5cm,clip=1cm}}
\end{minipage}
\begin{minipage}[t]{8cm}
\centerline{\epsfig{file=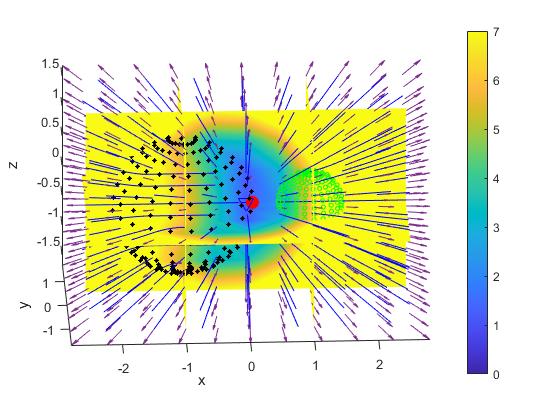, height=3.5cm,width=5cm,clip=1cm}}
\end{minipage}
\begin{center}
(c)\qquad\quad\qquad\qquad\qquad\qquad\qquad\qquad\qquad\qquad\qquad(d)
\end{center}
\begin{minipage}[t]{8cm}
\centerline{\epsfig{file=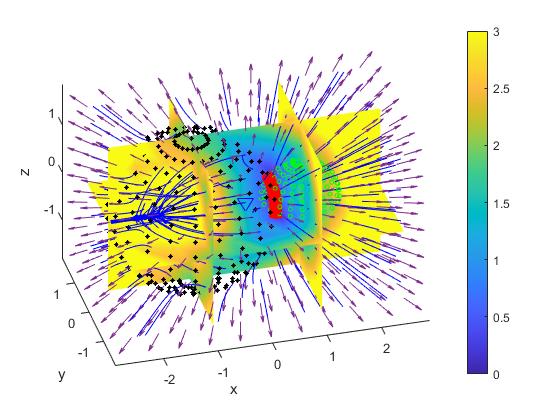, height=3.5cm,width=5cm,clip=1cm}}
\end{minipage}
\begin{minipage}[t]{8cm}
\centerline{\epsfig{file=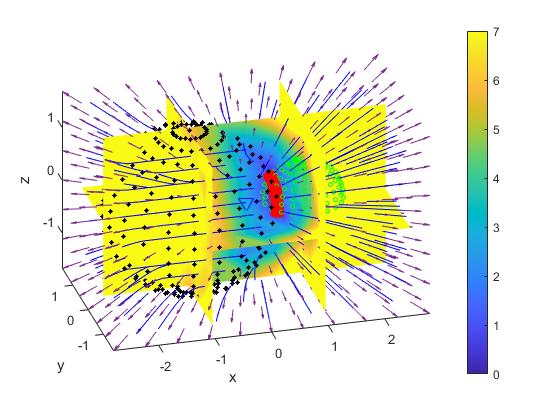, height=3.5cm,width=5cm,clip=1cm}}
\end{minipage}
\begin{center}
(e)\qquad\quad\qquad\qquad\qquad\qquad\qquad\qquad\qquad\qquad\qquad(f)
\end{center}
\begin{minipage}[t]{8cm}
\centerline{\epsfig{file=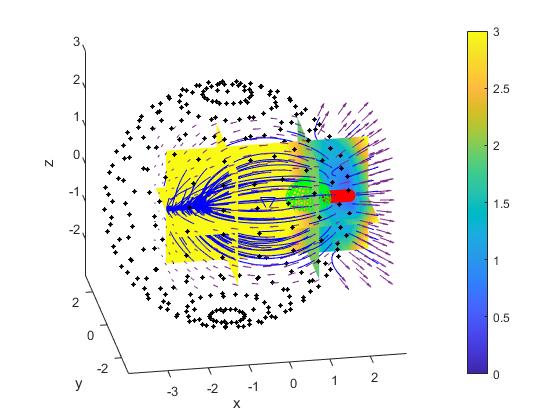, height=3.5cm,width=5cm,clip=1cm}}
\end{minipage}
\begin{minipage}[t]{8cm}
\centerline{\epsfig{file=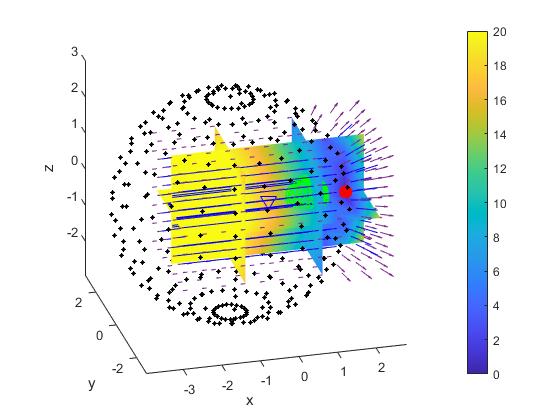, height=3.5cm,width=5cm,clip=1cm}}
\end{minipage}
\begin{center}
(g)\qquad\quad\qquad\qquad\qquad\qquad\qquad\qquad\qquad\qquad\qquad(h)
\end{center}
\caption{
The location of the three-dimensional sources for (a,c,e,g) $\psi(y)=y$  and (b,d,f,h) $\psi(y)=y^2$ when 
(a,b) $d_1\le \frac{\norm{Z_1 Z_2}}2$ (c,d) $\frac{\norm{Z_1 Z_2}}2\le d_1 <\norm{Z_1 Z_2}$ and $d_1+d_2\le\norm{Z_1 Z_2}$ 
(e,f) $d_1 - d_2  < \norm{Z_1 Z_2} < d_1 + d_2$ (g,h) $d_1 - d_2 \ge \norm{Z_1 Z_2}$. }
\label{fig:Meas2Dim3} 
\end{figure}

\section{Three measuremts (I=3)}
Let $1A=A$ and $0A=\phi$ for any set $A$. Define
$$ R_{i j k} = \overline{(iD_1 \cup (1-i)D_1^c) \cap  (jD_2 \cup (1-j)D_2^c) \cap (kD_3 \cup (1-k)D_3^c)}, $$
for $i,j,k \in \{0,1\}$. Also, define $R_l = \cup_{i+j+k=l} R_{ijk}$.

Let $\triangle$ denote the measurement triangle $\triangle Z_1 Z_2 Z_3$, $\alpha$ be the plane containg $\triangle$, and $K=\triangle \cap R_0 \cap \alpha$.

The critical points of the objective function $O$ can be classified as $Y_{ijk}$ where $i,j,k \in \{0,1\}$ and $Y_{ijk}\subset R_{ijk}$. 
Among these, only $Y_0, Y_{100}, Y_{010},$ and $Y_{001}$ can be local minima, while the others may be local maxima. Therefore, 
$$ Y=\{ Y_0\cap R_0, Y_{100}\cap R_{100}, Y_{010}\cap R_{101}, Y_{001}\cap R_{001}\},$$
is the set containing local minima and critical points. Let us define them for the cases $\psi(y)=y^2$ and $\psi(y)=y$.\\

For the squared distance case $\psi(y)=y^2$, the critical points are defined as:
$$\begin{array}{ccc}
Y_0         &=& \frac{Z_1 + Z_2 + Z_3}3,\\
Y_{100}  &=& -Z_1+ Z_2 + Z_3,\\
Y_{010}  &=& Z_1 - Z_2 + Z_3,\\
Y_{001}  &=& Z_1 + Z_2 -Z_3.
\end{array}$$
If $Y_0 \in R_0, Y_{100}\in R_{100}, Y_{010}\in R_{010},$ or $Y_{001}\in R_{001}$, then these points are critical points for the objective function $O$. 
It is proven in $\cite{KL2}$ that $|\{ Y_0\cap R_0, Y_{100}\cap R_{100}, Y_{010}\cap R_{010}, Y_{001}\cap R_{001} \}|\le 1$, meanig that the set contains either 0 or 1 point. 

For the distance error case $\psi(y)=y$, $Y_0$ is defiend as the Fermat point of the triangle $\triangle Z_1 Z_2 Z_3$. 
To describe $Y_{100}$, consider the plane $\alpha$ containing $Z_1, Z_2,$ and $Z_3$.  
In this plane, two half planes are divided by the line $\overleftrightarrow{Z_2 Z_3}$.  
Consider a point $\tilde Z_1$ in the half plane containg $Z_1$ such that $\tilde Z_1 Z_2 Z_3$ is a equilateral triangle. 
Let $C$ be the circumscribed circle of $\triangle \tilde Z_1 Z_2 Z_3$ and $\tilde C$ be the arc of $C$ that lies in the other half-plane 
and connects $Z_2$ and $Z_3$. $Y_{100}$ is defined as the intersection of the line $\tilde Z_1 Z_1$ and $\tilde C$, which can either be empty or consist of a single one point
if $\tilde Z_1 \neq Z_1$, or be the entire arc if $\tilde Z_1 = Z_1$. Similarly, $Y_{010}$ and $Y_{001}$ are defined.

If $\triangle Z_1 Z_2 Z_3$ is not eqilateral, $|Y|$ equals 0 or 1, and if $\triangle Z_1 Z_2 Z_3$ is equilateral, $|Y|=\infty$. \\

Let 
$$ S_{123}= S_1 \cap S_2 \cap S_3, $$
and
$$ S_{ij} = S_i \cap S_j,$$
for $i,j \in \{1,2,3\}$.

For both cases of $\psi$, $Y_0$ lies in $\overline \triangle$. Define $S_{12}^+$ and $S_{12}^-$ as the points in $S_{12}$ which are closest to and farest from $Y_0$, respectively. 
Note that $S_{12}^+$ and $S_{12}^-$ lie in $alpha$. Similarily, $S_{23}^\pm$ and $S_{31}^\pm$ are defined.

The definitions of $N_i,$ for $i=1,2,3$, as members of $S_i$, vary depending on $\psi$ and will be explained in the following subsections.

The results in this section are derived from the  papers \cite{KL2,KL2Cor} for $\psi(y)=y^2$ and the paper \cite{KL1} for $\psi(y)=y$ in two-dimensional case. 
The following two lemmas illustrate the difference between the two- and three- dimensional cases:
\begin{lem}\label{le:2D}
In the two dimensional case with $n=2$, the following hold:
\begin{itemize}
\item{If $Z_1,Z_2,$ and $Z_3$ are not collinear, $|S_{123}|=0,$ or $1$.
}
\item{If $|S_{123}|=1$, then $X=S_{123}$.
}
\item{If $R_{100}$ is nonempty, then it is either connected.or has two connected components.
}
\item{If all of $R_{100},R_{010},$ and $R_{001}$ are nonempty, then $|S_{123}|=0$.
}
\end{itemize}
\end{lem}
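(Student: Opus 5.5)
We handle the four items of Lemma \ref{le:2D} one at a time. Items one and two are elementary consequences of the definition of $O$. Items three and four need a little plane geometry of three disks.

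\emph{First item.} Suppose $W\in S_{123}$. Subtracting the circle equations $\norm{W-Z_i}^2=d_i^2$ pairwise gives
\[
2(Z_j-Z_i)\cdot W=\norm{Z_j}^2-\norm{Z_i}^2-d_j^2+d_i^2 .
\]
Taking $(i,j)=(1,2)$ and $(1,3)$ gives a $2\times 2$ linear system in $W$ with coefficient rows $Z_2-Z_1$ and $Z_3-Z_1$. Non-collinearity makes this matrix invertible, so the system has at most one solution. Hence $|S_{123}|\le 1$.

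\emph{Second item.} The objective $O$ is a sum of absolute values, so $O\ge 0$. At any point of $S_{123}$ every term vanishes, so $O=0$ there. Conversely, $O(W)=0$ forces $\psi(\norm{W-Z_i})=\psi(d_i)$ for every $i$. Since $\psi$ is strictly increasing, this gives $\norm{W-Z_i}=d_i$ for all $i$, i.e. $W\in S_{123}$. So $X=S_{123}$ whenever $S_{123}\neq\phi$, and this argument works for both choices of $\psi$.

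\emph{Third item.} By definition, $R_{100}$ is the closure of $D_1\cap D_2^c\cap D_3^c=D_1\setminus(D_2\cup D_3)$. Each set $D_1\setminus D_j$ is a disk minus a disk. In the plane this is connected: it is a crescent, an annulus-like region, or all of $D_1$. Removing the second disk $D_3$ can cut this region. I would parametrize $D_1$ by rays from $Z_1$ and look at where $S_1$ is covered by $D_2$ and by $D_3$. Each of $S_1\cap D_2$ and $S_1\cap D_3$ is a single arc (or empty, or all of $S_1$). The complement on $S_1$ of a union of two arcs has at most two arc components. I would then argue that every component of $D_1\setminus(D_2\cup D_3)$ meets $S_1$ in one of these arcs, unless it is all of $D_1$ minus disks lying inside it, in which case it is connected. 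That gives at most two components. I expect this step to be the main obstacle, because of the case where $D_2$ or $D_3$ sits inside $D_1$. In that case I would use the fact that $D_1\setminus(D_2\cup D_3)$ deformation retracts onto a region adjacent to $S_1$, or else treat nested disks separately.

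\emph{Fourth item.} Suppose $S_{123}=\{P\}$ and all three regions $R_{100},R_{010},R_{001}$ are nonempty. Since $P$ lies on all three circles, it is a common boundary point. Near $P$ the three disks are close to half-planes through $P$ with inner normals $Z_i-P$. Since $P\in D_1\cap D_2\cap D_3$, each $R_{100}$ must lie outside $D_2$ and $D_3$ while inside $D_1$. I would show that three disks through a common point $P$ cannot each contain a point outside both of the other two. Work in the inversion centered at $P$: the three circles become three lines, and each disk becomes a half-plane, or the complement of one. The claim then reduces to a statement about three half-planes in the plane, where one checks directly that the regions of the form $H_i\setminus(H_j\cup H_k)$ cannot all be nonempty. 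The part requiring care is the half-plane bookkeeping, in particular whether the disk maps to the half-plane or to its exterior. The degenerate case where $P$ coincides with a center must also be excluded.
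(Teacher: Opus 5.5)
Your arguments for the first two items are correct and complete. For the third item the outline is right, and the step you flag as the obstacle needs no separate treatment of nested disks. The set $\mathbb{R}^2\setminus(D_2\cup D_3)$ is connected: if $D_2\cap D_3\neq\phi$, the union is star-shaped about a common point, so every exterior point escapes to infinity along a ray, and the disjoint case is clear. Now take a component of $D_1\setminus(D_2\cup D_3)$ that misses $S_1$. It is open, being a component of the open set $D_1^o\setminus(D_2\cup D_3)$. It is also relatively closed in the unbounded connected set $\mathbb{R}^2\setminus(D_2\cup D_3)$, so it would have to be all of that set, which is impossible since it is bounded. Hence every component contains one of the at most two arcs of $S_1\setminus(D_2\cup D_3)$. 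Taking the closure to get $R_{100}$ keeps the count at most two.

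The fourth item is where the proposal fails, and no half-plane bookkeeping can rescue it, because the item itself is false. Your inversion is set up correctly: with $P=0$ and $d_i=\norm{Z_i}$, $D_i\setminus\{0\}$ maps onto the closed half-plane $H_i=\{W : Z_i\cdot W\ge \frac12\}$. But the half-plane fact you intend to check does not hold. Take $Z_1=(1,0)$, $Z_2=(-\frac12,\frac{\sqrt3}2)$, $Z_3=(-\frac12,-\frac{\sqrt3}2)$ and $d_1=d_2=d_3=1$. Since $Z_1\cdot Z_2=Z_1\cdot Z_3=-\frac12$, the point $\frac12 Z_1$ lies in $H_1\setminus(H_2\cup H_3)$. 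Undoing the inversion, $(2,0)\in D_1$ has distance $\sqrt7>1$ from both $Z_2$ and $Z_3$, so $R_{100}\neq\phi$. By the $120^\circ$ rotational symmetry, $R_{010}$ and $R_{001}$ are nonempty as well. Yet all three circles pass through the origin and the centers are not collinear, so $|S_{123}|=1$. In this example all three of these regions are even connected and $R_3=\{0\}\neq\phi$, so adding the hypotheses of case 4 in Section 4.1 does not save the item either. The paper gives no proof of this lemma and only cites earlier work, so there is no argument there to recover. The fourth item has to be discarded, with $|S_{123}|=0$ imposed as an explicit hypothesis wherever it is needed, as the paper's two-dimensional corollary in fact does.
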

The proof is straightforward and can be found in \cite{KL1, KL2,KL2Cor}. However, in the three-dimensional case, the situation differs significantly from the two-dimensional case,
as described below:

\begin{lem}\label{le:3D}
In the three dimensional case with $n=3$, the following hold:
\begin{itemize}
\item{If $Z_1,Z_2,Z_3$ does not lies on the same line, $|S_{123}|=0,1,$ or $2$.}
\item{If $|S_{123}|=1$ or $2$, then $X=S_{123}$.}
\item{If $R_{100}$ is nonempty, then it is connected.}
\item{If all of $R_{100},R_{010},R_{001}$ are nonempty, then $|S_{123}|=2$.}
\end{itemize}
\end{lem}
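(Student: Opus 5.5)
We would treat the four items in order. Throughout we use the reflection symmetry of the configuration across the plane $\alpha$, which contains $Z_1,Z_2,Z_3$. This lets us reduce parts of the argument to the planar picture in $\alpha$, where Lemma \ref{le:2D} and the two-measurement analysis of Section 3 apply.

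\emph{First item.} A point $W$ lies on $S_1\cap S_2\cap S_3$ iff $\norm{W-Z_i}^2=d_i^2$ for $i=1,2,3$. Subtracting the first equation from the other two gives two linear equations, $2(Z_1-Z_j)\cdot W = d_j^2-d_1^2+\norm{Z_1}^2-\norm{Z_j}^2$ for $j=2,3$. Since $Z_1,Z_2,Z_3$ are not collinear, the normals $Z_1-Z_2$ and $Z_1-Z_3$ are independent. The solution set is therefore a line $L$ orthogonal to $\alpha$, and $S_{123}=L\cap S_1$. A line meets a sphere in $0$, $1$ or $2$ points, so $|S_{123}|\in\{0,1,2\}$. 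In the case of two points they are mirror images across $\alpha$; in the case of one point it lies in $\alpha$.

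\emph{Second item.} Both choices of $\psi$ give $O\ge 0$, with $O(W)=0$ iff $\psi(\norm{W-Z_i})=\psi(d_i)$ for all $i$. Since $\psi$ is increasing, this holds iff $W\in S_{123}$. Hence if $S_{123}\neq\phi$, the minimum value is $0$ and $X=S_{123}$.

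\emph{Third item.} Write $R_{100}=D_1\setminus(D_2^o\cup D_3^o)$, up to closure. The plan has two steps.

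First, show that the spherical part $\Sigma=S_1\setminus(D_2^o\cup D_3^o)$ is connected. Here $S_1\cap D_2^o$ and $S_1\cap D_3^o$ are open spherical caps $A_2,A_3$. If $A_2\cap A_3=\phi$, the complement of two disjoint round caps in $S^2$ is an annulus-type set, which is connected. If $A_2\cap A_3\neq\phi$, the intersection of the caps is a convex lens, hence connected. Mayer--Vietoris then gives that $A_2\cup A_3$ is a simply connected open set, so its complement in $S^2$ is connected (Alexander duality/Jordan--Schoenflies type argument). The degenerate case where the caps cover $S_1$ is excluded by nonemptiness or handled separately.

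Second, join each point $W\in R_{100}$ to $\Sigma$ by a path inside $R_{100}$. Our first attempt is to move along the ray from $Z_1$ through $W$. If this ray enters $D_2^o$, we replace the straight move by a path sliding along the boundary sphere $S_2$, which stays in $R_{100}$, until the ray is free again. The contrast with the planar case, where $R_{100}$ can have two components, comes exactly from this step: in $\mathbb R^3$ one can go around the balls $D_2,D_3$ out of the plane $\alpha$.

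\emph{Fourth item.} We would intersect with $\alpha$. If $R_{100},R_{010},R_{001}$ are all nonempty, then by symmetry their traces in $\alpha$ are nonempty. The planar situation then forces the three disks $D_i\cap\alpha$ to pairwise overlap, with a region of $\alpha$ covered by all three disks or by none of them. From this we would show that the point of $L\cap\alpha$ lies strictly inside $D_1$. Then $L$ meets $S_1$ transversally in two points, and $|S_{123}|=2$.

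\textbf{Main obstacle.} We expect the main difficulty in the last two items: the path construction in the third item, and the claim in the fourth item that $L\cap\alpha$ lies strictly inside $D_1$. The statement as written may need additional hypotheses, such as nondegenerate overlap, and we would check it against the boundary configurations before committing to the argument.
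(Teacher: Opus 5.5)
\paragraph{Gaps in the fourth and third items.} Your first two items are correct. For the first, the radical-line argument ($S_{123}=L\cap S_1$ with $L\perp\alpha$) is cleaner than the paper's.

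The genuine gap is in the fourth item, and a more careful argument cannot close it, because the bullet is false as written. Take three pairwise disjoint balls, e.g.\ radius $1$ with centres at the vertices of an equilateral triangle of side $10$. Then $R_{100}=D_1$, $R_{010}=D_2$, $R_{001}=D_3$ are all nonempty, yet $S_{123}=\phi$. So your step ``the planar situation forces the disks $D_i\cap\alpha$ to overlap pairwise'' fails. This configuration is open and generic, not a boundary case.

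Even adding $R_3\neq\phi$ does not rescue the bullet. Take $Z_1=(-0.9,0,0)$, $Z_2=(0.9,0,0)$, $d_1=d_2=1$, $Z_3=(0,2,0)$, $d_3=2.5$:
\begin{itemize}
\item the lens $D_1\cap D_2$ lies in $D_3$, so $R_3\neq\phi$;
\item the points $(-1.8,0,0)$, $(1.8,0,0)$, $(0,4,0)$ lie in $R_{100},R_{010},R_{001}$ respectively;
\item but $L\cap\alpha=(0,-0.515,0)$ has power $\approx 0.075>0$ with respect to all three spheres, so $L$ misses them and $S_{123}=\phi$.
\end{itemize}

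The paper's proof tacitly assumes much more: that $R_3\cap\alpha$ is a curvilinear (``spherical'') triangle bounded by one arc of each circle. It then slices by planes $\alpha^k$ parallel to $\alpha$, where the radii become $\sqrt{d_j^2-k^2}$, until the triangle collapses to a point, and takes that point together with its mirror image. Under that hypothesis your radical-centre route can be completed. If $L\cap\alpha$ were outside the convex set $R_3\cap\alpha$, the middle one of the three power-diagram sectors at $L\cap\alpha$ would contribute two separate arcs of the same circle, contradicting the triangle shape. Without such a hypothesis the statement does not hold.

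The third item is true, but your step joining $W\in R_{100}$ to $\Sigma$ is not yet an argument. The ray from $Z_1$ through $W$ can enter $D_2$ before reaching $S_1$. You also do not show that ``sliding along $S_2$'' stays in $D_1$, avoids $D_3$, or ever returns to the ray.

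The missing idea is to move away from the axis $\ell=\overleftrightarrow{Z_2Z_3}$ rather than away from $Z_1$, since $D_2\cup D_3$ is a solid of revolution about $\ell$. If $F$ is the foot of $W$ on $\ell$, then $\norm{F+t(W-F)-Z_j}^2=\norm{F-Z_j}^2+t^2\norm{W-F}^2$ for $j=2,3$, which increases in $t$. So for $t\ge 1$ the ray never enters $D_2\cup D_3$. By convexity of $D_1$ it stays in $D_1$ until it exits through $S_1$; if $W\in\ell$, use any direction orthogonal to $\ell$. This joins every point of $D_1\setminus(D_2\cup D_3)$ by a segment to $S_1\setminus(D_2\cup D_3)$. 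It also shows that this spherical set is nonempty whenever $R_{100}$ is, which your plan left open.

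Connectedness of the spherical set is then elementary. It is $S_1$ minus two closed caps, i.e.\ an intersection of two open caps. Under stereographic projection this becomes an intersection of two disks, a disk minus a closed disk, or the complement of two closed disks, and each of these is connected. Since $R_{100}$ is the closure of a connected set, it is connected.

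Finally, your claim in the fourth item that the traces of $R_{100},R_{010},R_{001}$ in $\alpha$ are nonempty does not follow from symmetry alone. It follows from symmetry together with this connectedness.
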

\begin{proof}
$S_{12}$ can be empty or one point or a circle. $S_{123}$ could be empty, a single point, or two points. If $S_{123}\neq \phi$, then 
$O(S_{123}) = 0$ and $X=S_{123}$.

Assume $R_{100}$ is nonempty. Then, one of two points $Z_1 - d_1 \frac{Z_2-Z_1}{\norm{Z_2-Z_1}}$ or $Z_1 - d_1 \frac{Z_3-Z_1}{\norm{Z_3-Z_1}}$, 
which are the farest points from $Z_2$ and $Z_3$ respectively, is contained in $R_{100}$. Therefore, $R_{100}\cap\alpha$ is also nonempty.

If $D_2 \cap D_3 =\phi, D_2 \subset D_3,$ or $D_3\subset D_2$, then $R_{100}$ is easily proven to be connected. 
Otherwise, we have $|d_2-d_3| \le \norm{Z_2 Z_3} \le d_2 + d_3$. 
Let the plane passing through $Z_1,Z_2,$ and $Z_3$ be denoted as $\alpha$.
Since the region $R_{100}$ is symmetric with respect to the plane $\alpha$, we only need to consider the half-space divided by $\alpha$.
Let $\beta$ be the plane containing $S_{23}$ and orthogonal to $\alpha$.
Let $B$ be the  foot of the perpendicular from $Z_1$ to the plane $\beta$. Then, $D_1 \cap \beta$ is the disk centered at $B$ with radius $\sqrt{d_1^2 - \norm{Z_1 B}^2}$
and this disk contains $S_{23}$. Hence, in this case also, $R_{100}$ is also connected. Thus, in all the cases, if $R_{100}$ is nonempty, it is connected.

If all of $R_{100},R_{010},R_{001}$ are nonempty, then $R_{100}\cap\alpha,R_{010}\cap\alpha,R_{001}\cap\alpha$ are all nonempty, as shown previously. 
$R_3 \cap \alpha$ forms a spherical triangle in $\alpha$. 
Let $\alpha^k$ be the hyperplane parallel to $\alpha$ at a distance $k$, and let $Z_1^k, Z_2^k,Z_3^k$ be the intersection points between $\alpha^k$ and the perpendicular to $\alpha$ passing through $Z_1,Z_2,Z_3$, respectively. The radius of a circle in the plane $\alpha^k$ centered at $Z_j^k$ (for $j=1,2,3$) becomes $\sqrt{ max(d_j^2 - k^2,0)}$
Therefore, there exists a number $k>0$ such that $R_3\cap \alpha_k$ is a singel point. By symmetry with respect to the plane $\alpha$, there are two points such the line joining them is
orthogonal to $\alpha$. These two points must be $S_{123}$. 
\end{proof}


\subsection{The squared distance error case  ($\psi(y)=y^2$)}

By Lemma \ref{le:2D} and Lemma \ref{le:3D}, if $S_{123}\neq\phi$, then $X=S_{123}$.
Form now on, consider the nontrivial case where $S_{123}=\phi$. 
The results vary depending on whether the dimension is $n=2$ or $n=3$. 

In two dimension, the following cases arise:
\begin{enumerate}
\item{$R_3 = \phi$}
\item{$R_3 \neq \phi$  and  one of $R_{100}, R_{010}, R_{001}$ is empty.}
\item{$R_3 \neq \phi$  and  one of $R_{100}, R_{010}, R_{001}$ is nonempty but disconnected.}
\item{$R_3 \neq \phi$  and  all of $R_{100}, R_{010}, R_{001}$ are nonempty and connected.}
\end{enumerate}
In three dimensions, with the aid of Lemma \ref{le:3D}, we enounter the following simple cases:
\begin{enumerate}
\item{$R_3 = \phi $}
\item{$R_3 \neq \phi$ and  one of  $R_{100}, R_{010}, R_{001}$ is empty.}
\end{enumerate}

In the two-dimensional case 3, $X=X_{100}$ and $|X|=1$ \cite{KL2}. 
In case 4, $X\subset \{ S_{12}^\pm, S_{23}^\pm, S_{31}^\pm\}$ and $|X|\le 5$ \cite{KL2Cor}. 
In this subsection, focus is placed on cases 1 and 2 in both two or three dimensions, which covers all the possible scenarios in three dimensions.
In case 2, assume without loss of generality that $R_{001}=\phi$, which means $D_3\subset D_1\cup D_2$.
Therefore, the analysis in this subsection is restricted to the cases where $R_3=\phi$ or $D_3\subset D_1\cup D_2$. 
Define
$$ N_{ij} = Z_i + d_i  \frac{Y_j - Z_i}{\norm{Y_j-Z_i}}, \quad    N_i := N_{ii}, \quad  i,j \in \{1,2,3\}. $$

Following \cite{KL2}, the following results hold in both two and three diemensions:\\

\begin{itemize}
\item{If $D_3 \subset D_1 \cap D_2$ and $S_{12}\neq\phi$, then $X=S_{12}^+$.}
\item{If $D_3\subset D_1$ and $D_2\subset D_1$, then  $X=X_{100}$, meaning\\
$$\left\{\begin{array}{ccl}
X = Y_{100}  &\mbox{ if }& Y_{100}\in R_{100},\\
X = N_1        &\mbox{ if }& Y_{100}\notin D_1,\\
X=N_{21}     &\mbox{ if }& Y_{100}\in D_2, N_{21}\in R_{100},\\
X=S_{23}^+ &\mbox{ if }& Y_{100}\in D_2, N_{21}\in D_2.
\end{array}\right.$$}
\item{If $K$ is nonempty, then $R_3=\phi$ and $X=X_0$, meaning,\\
$$\left\{\begin{array}{ccl}
X=Y_0         &\mbox{ if }& Y_0\in R_0,\\
X=N_1         &\mbox{ if }& Y_0\in D_1, N_1\in R_0,\\
X=S_{12}^+ &\mbox{ if }& Y_0\in D_1, N_1\in D_2.\\
\end{array}\right.$$}
\item{If $K=\phi$ and $R_3=\phi$, then there exists a ball(3D) or a disk(2D) that intersects the other two balls or two disks. 
Let this ball or disk be $D_1$. Then, we have $X=X_{100}$, meaning:\\
$$\left\{\begin{array}{ccl}
X=Y_{100}    &\mbox{ if }& Y_{100}\in R_{100},\\
X=N_1          &\mbox{ if }& Y_{100}, N_1 \in R_0,\\
X=N_{21}     &\mbox{ if }& Y_{100}\in D_2, N_{21}\in R_{100},\\
X=S_{12}^+ &\mbox{ if }& Y_{100}\in R_0, N_1\in D_2  \mbox{ or } Y_{100}\in D_2, N_{21}\in R_0.\\
\end{array}\right.$$}
\end{itemize}

\begin{corollary}
In two dimensional case with $n=2$, the number of solutions can be 1, 2, 3, 4, or 5:\\
$$\left\{\begin{array}{rl}
|X|\le 5, &X\subset \{S_{12}^\pm, S_{23}^\pm, S_{31}^\pm \} \\
&\mbox{ if } |S_{123}|=0, R_3\neq\phi, \mbox{ and all of } R_{100}, R_{010}, R_{001}  \mbox{ are nonempty and connected,}\\
|X|=1 &\mbox{ Otherwise. } 
\end{array}\right.$$
\end{corollary}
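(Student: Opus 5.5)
The corollary follows by collecting the two-dimensional case analysis already set out in this subsection. The proof is a case split, and in every branch except one we only need to see that the minimizer is a single point.

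\emph{Step 1: the case $S_{123}\neq\phi$.} By Lemma \ref{le:2D}, when $Z_1,Z_2,Z_3$ are not collinear we have $|S_{123}|\le 1$. If $|S_{123}|=1$, then $X=S_{123}$ because $O$ vanishes there, so $|X|=1$. The collinear case must be excluded or handled separately. Three circles with collinear centres can share at most two points, and two circles meeting in two points have those points symmetric about the centre line. So if the three circles meet in two points, two of them coincide. I would treat this degenerate situation as excluded, as the paper implicitly does.

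\emph{Step 2: the case $S_{123}=\phi$.} We run through the four two-dimensional cases listed above.

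\begin{itemize}
\item In case 3 (some $R_{100}$ nonempty but disconnected), \cite{KL2} gives $X=X_{100}$ with $|X|=1$.
\item In case 4 (all of $R_{100},R_{010},R_{001}$ nonempty and connected), \cite{KL2Cor} gives $X\subset\{S_{12}^\pm,S_{23}^\pm,S_{31}^\pm\}$ with $|X|\le 5$. This is exactly the exceptional branch of the corollary.
\item In cases 1 and 2 ($R_3=\phi$, or, after relabelling, $D_3\subset D_1\cup D_2$), the itemized list from \cite{KL2} applies. It covers four subcases: $D_3\subset D_1\cap D_2$; nested disks; $K\neq\phi$; and $K=\phi$ with $R_3=\phi$. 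In every subcase $X$ is one of the single points $Y_0$, $Y_{100}$, $N_1$, $N_{21}$, $S_{12}^+$ or $S_{23}^+$.
\end{itemize}

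For the point $S_{ij}^+$ to be single, I would argue as follows. In two dimensions $|S_{ij}|\le 2$. The two points of $S_{ij}$ are reflections of each other across the line $\overleftrightarrow{Z_iZ_j}$. The point $Y_0$ lies in $\overline{\triangle}$, and off that line it lies strictly on one side, so exactly one of the two points is closest to $Y_0$. Hence $|X|=1$ in all of cases 1--3.

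\emph{Main obstacle.} The hardest step is checking that the case lists are exhaustive and mutually consistent. The subcase conditions for cases 1 and 2 are stated somewhat loosely, for example ``one of $R_{100},R_{010},R_{001}$ is empty''. These conditions must cover every configuration with $R_3=\phi$ or some $R_{00\,1}$-type region empty. I would verify this by noting that, whenever $R_3=\phi$, either $K\neq\phi$ or some disk meets the other two (the last bullet). I would also treat the tie cases, such as $Y_0$ on the line $\overleftrightarrow{Z_iZ_j}$, by a continuity or symmetry argument. If such a tie genuinely produces two minimizers, then the ``$|X|=1$ otherwise'' clause would need an extra nondegeneracy assumption, and I would flag this in the proof.
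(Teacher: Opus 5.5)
Your decomposition matches the paper's: Lemma~\ref{le:2D} when $S_{123}\neq\phi$, then cases 1--4 via \cite{KL2}, \cite{KL2Cor} and the itemized list. However, the collinear discussion in Step 1 contains a false claim. Three distinct circles with collinear centres can share two points. By the very symmetry you invoke, a circle centred on $\overleftrightarrow{Z_1Z_2}$ that passes through one point of $S_{12}$ also passes through its mirror image. For example, $Z_1=(-1,0)$, $Z_2=(0,0)$, $Z_3=(1,0)$, $d_1=d_3=\sqrt2$, $d_2=1$ gives three distinct circles with $X=S_{123}=\{(0,\pm1)\}$, contradicting the ``otherwise'' clause.

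Collinear data break the corollary even when $S_{123}=\phi$, and this is exactly the tie you worried about. Take $Z_1=(-1,0)$, $Z_2=(1,0)$, $Z_3=(0,0)$, $d_1=d_2=3/2$, $d_3=1/10$. Then $D_3\subset D_1\cap D_2$, and a region-by-region computation gives $X=\{(0,\pm\sqrt5/2)\}$: the two points of $S_{12}$ are equidistant from $Y_0=(0,0)$. So non-collinearity must be a standing hypothesis of the whole corollary, not a degenerate subcase of Step 1 resolved by coincident circles. The paper assumes it tacitly, through the measurement triangle and the hypothesis of Lemma~\ref{le:2D}.

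Under that hypothesis, no continuity argument is needed for ties. The centroid $Y_0$ lies in the open triangle, hence off every side line. In the $X=X_{100}$ branches, $O$ equals $\norm{W-Y_{100}}^2$ plus a constant on $R_{100}$, so the point of $S_{ij}$ selected there is the one nearest $Y_{100}$. Your reflection argument still works, because $Y_{100}=Z_2+Z_3-Z_1$ also avoids all three side lines.

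Two smaller points remain. First, the exhaustiveness check you propose only covers $R_3=\phi$. In case 2 the listed bullets require, up to relabelling, $D_3\subset D_1\cap D_2$ or $D_2,D_3\subset D_1$. This does not exhaust $D_3\subset D_1\cup D_2$: for instance $Z_1=(-1,0)$, $Z_2=(1,0)$, $d_1=d_2=2$, $Z_3=(9/10,1/10)$, $d_3=1/2$ has $R_3\neq\phi$, $R_{001}=\phi$ and $S_{123}=\phi$, yet matches no bullet. For such configurations the conclusion $|X|=1$ must be taken from \cite{KL2} directly; the summarized bullets do not supply it. Second, ``can be 1, 2, 3, 4, or 5'' also asserts that every value occurs, but your argument only gives the bound. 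Attainment comes from the isosceles analysis of \cite{KL2Cor}, where the criterion via $P$ gives $|X|=5,4,1$, and from the examples of Fig.~\ref{fig:Meas3Dim2_Psi2} for $2$ and $3$.
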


\begin{corollary}\label{co:12}
In three dimensional case with $n=3$, the number of solutions can be 1 or 2:\\
$$\left\{\begin{array}{ccc}
|X|=1 &\mbox{ if }& |S_{123}|=0,1, \\
|X|=2 &\mbox{ if }& |S_{123}|=2.
\end{array}\right.$$
\end{corollary}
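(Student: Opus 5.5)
The plan splits on $|S_{123}|$, invoking Lemma \ref{le:3D} for the easy branch and the case analysis of this subsection for the rest.

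\textbf{Case $|S_{123}|\ge 1$.} By Lemma \ref{le:3D}, $X=S_{123}$, because $O$ vanishes exactly on $S_{123}$. Hence $|X|=|S_{123}|$, which settles both $|S_{123}|=1$ and $|S_{123}|=2$.

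\textbf{Case $|S_{123}|=0$.} I need to show $|X|=1$.

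First I reduce to the two three-dimensional cases listed earlier: either $R_3=\phi$, or $R_3\neq\phi$ and one of $R_{100},R_{010},R_{001}$ is empty. Suppose instead that $R_3\neq\phi$ and all three of $R_{100},R_{010},R_{001}$ are nonempty. Lemma \ref{le:3D} then forces $|S_{123}|=2$, a contradiction. By relabeling, I may assume $R_{001}=\phi$, that is, $D_3\subset D_1\cup D_2$.

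Next I go through the itemized results quoted from \cite{KL2}, which hold in both dimensions.
\begin{itemize}
\item If $K\neq\phi$, the solution is $X_0$, which is one of $Y_0$, $N_1$, $S_{12}^+$.
\item If $K=\phi$ and $R_3=\phi$, the solution is $X_{100}$, which is one of $Y_{100}$, $N_1$, $N_{21}$, $S_{12}^+$.
\item The nested cases ($D_3\subset D_1\cap D_2$, or $D_2,D_3\subset D_1$) give $S_{12}^+$ or $X_{100}$.
\end{itemize}
In each branch the candidate is a single point. $Y_0$, $Y_{100}$, $N_1$ and $N_{21}$ are single points by definition. $S_{12}^+$ is defined as the unique point of $S_{12}$ closest to $Y_0$. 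Since $Y_0\in\alpha$ and $S_{12}$ is a circle orthogonal to $\alpha$, symmetric about it and centered on the line $Z_1Z_2\subset\alpha$, the closest point is unique provided $Y_0$ is not on the axis of that circle. The axis is the line $Z_1Z_2$, and $Y_0$ lies off it unless the triangle degenerates.

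The main obstacle is that the \cite{KL2} analysis was written in two dimensions. I must check that its comparison of $O$ across regions carries over to $\mathbb R^3$. My approach is to use the rotational symmetry of $O$ about $\alpha$ together with the explicit forms of $O$ on each region, as in the $I=2$ lemma:
\begin{itemize}
\item on $R_0$, $O$ is a strictly convex quadratic;
\item on $R_{100}$-type regions, $O$ is an affine term plus a quadratic;
\item on $R_3$, $O$ is a concave quadratic, so its minimum lies on the boundary.
\end{itemize}
Minimizers of strictly convex pieces are unique. Minimizers of the other pieces are pushed onto the boundary spheres, where the nearest or farthest point to a fixed point of $\alpha$ is unique. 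Reflection across $\alpha$ then shows that any off-plane minimizer would have a mirror twin. Uniqueness therefore requires ruling out off-plane minima, and this follows from the strict convexity and the boundary-extremal arguments just listed.
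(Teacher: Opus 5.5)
Your outline matches the paper's: Lemma~\ref{le:3D} handles $|S_{123}|\ge 1$, and for $|S_{123}|=0$ you reduce to $R_3=\emptyset$ or $R_{001}=\emptyset$ and read off the list quoted from \cite{KL2}. The first branch is fine, but the second has a genuine gap. Your reduction rests on the last item of Lemma~\ref{le:3D}, and that item is false even with the hypothesis $R_3\neq\emptyset$ you supply. Take $Z_1=(10,-1,0)$, $Z_2=(10,1,0)$, $Z_3=(12,0,0)$, $d_1=d_2=10$, $d_3=\sqrt{143}$. Then $(10,0,0)\in R_3$, $(0.05,-1,0)\in R_{100}$, $(0.05,1,0)\in R_{010}$ and $(22,0,0)\in R_{001}$. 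Yet subtracting the sphere equations forces $y=0$ and then $x=0$, after which $\norm{W-Z_1}^2=101+z^2\neq 100$, so $S_{123}=\emptyset$. In this configuration $K=\emptyset$, $R_3\neq\emptyset$ and no ball lies in the union of the other two, so none of the bullets you invoke applies. The list is also not exhaustive when $R_{001}=\emptyset$. For example, take $Z_{1,2}=(\mp1,0,0)$, $Z_3=(0,0.1,0)$, $d_1=d_2=10$, $d_3=9.5$. Then $D_3\subset D_1\cup D_2$, $D_3\not\subset D_1$, $D_3\not\subset D_2$, $R_3\neq\emptyset$ and $S_{123}=\emptyset$, and no bullet covers this case.

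Your sketch for transferring the two-dimensional results to $\mathbb R^3$ does not close the gap either. On $R_0$ and $R_{100}$ the objective is $3\norm{W-Y_0}^2$, resp.\ $\norm{W-Y_{100}}^2$, plus a constant. These regions are not convex, so minimizing there is a nearest-point problem with no built-in uniqueness. Excluding off-plane minimizers only gives $X\subset\alpha$; it does not rule out several in-plane minimizers, which is exactly how $|X|$ reaches $5$ in two dimensions.

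A route that covers every configuration at once is to lift to a paraboloid. Put the origin at the radical center $P\in\alpha$, and let $c$ be the common power of $P$. Then $\norm{W-Z_i}^2-d_i^2=t-2Z_i\cdot W$ with $t=\norm{W}^2+c$. Since $S_{123}=\{W\perp\alpha:\norm{W}^2=-c\}$, we have $|S_{123}|=0$ iff $c>0$. The function $F(W,t)=\sum_i|t-2Z_i\cdot W|$ is convex on $\mathbb R^4$ and vanishes only on $\{(W,0):W\perp\alpha\}$. That set misses the convex set $E=\{t\ge\norm{W}^2+c\}$ when $c>0$. The set of minimizers of $F$ on $E$ is nonempty by coercivity, and convex. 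It contains no interior point of $E$, since such a point would be a global minimizer of $F$, i.e.\ a zero. So it lies on the strictly convex boundary $t=\norm{W}^2+c$ and is a single point. Because $O(W)=F(W,\norm{W}^2+c)$, this gives $|X|=1$. In the first example one finds $X=\{(10-\sqrt{99},0,0)\}$, so the corollary itself survives; it is your route (and the paper's) that breaks.
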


Suppose that $\norm{Z_1 Z_3}=\norm{Z_2 Z_3}, d_1=d_2,$ and $d_3^2 = d_1^2 - ||Z_1 Z_3||^2$, and define 
$$ P= \frac{||Z_1 Z_2||^2}4 + \left(||Z_1 Z_3||^2 - \frac{||Z_1 Z_2||^2}4\right)
                             \left( \frac{ ||Z_1 Z_3||^2 + ||Z_1 Z_2||^2}{ ||Z_1 Z_3||^2 - ||Z_1 Z_2||^2}\right)^2. $$
We also have the following detailed interesting non-uniqueness results in the two dimensional case \cite{KL2Cor}:
$$\left\{\begin{array}{ll}
|X|=5, X=\{S_{12}^+,S_{23}^\pm,S_{31}^\pm\}  &\mbox{ if } ||Z_1 Z_3||> ||Z_1 Z_2|| \mbox{ and } d_1^2 = P.\\
|X|=4, X=\{S_{23}^\pm,S_{31}^\pm\}                  &\mbox{ if } ||Z_1 Z_3||> ||Z_1 Z_2|| \mbox{ and } d_1^2  > P.\\
|X|=1, X=S_{12}^+                                                 &\mbox{ if } ||Z_1 Z_3||\le ||Z_1 Z_2|| \mbox{ or } d_1^2  \le  P.           
\end{array}\right.$$

Futhermore, the first condition is both sufficient and necessary for $|X|=5$, with reordering if neededy. Specifically,
$$ | X|=5 \mbox{ if and only if } ||Z_1 Z_3||=||Z_2 Z_3||> ||Z_1 Z_2||,  d_3^2 = d_1^2 - ||Z_1 Z_3||^2,  \mbox{ and }d_1^2 = P, $$
where reordering may be requried. In this scenario, $X$ is given by $X=X=\{S_{12}^+,S_{23}^\pm,S_{31}^\pm\}$, again with reordering if necessary.

The exact location of the soultion $X$ is not precisely determined in the two dimensional case when 
$$
|S_{123}|=0 \mbox{ and } R_3\neq\phi \mbox{ and all of } R_{100}, R_{010}, R_{001}  \mbox{ are nonempty and connected}.
$$
In this case, it is only known that $X\subset \{S_{12}^\pm, S_{23}^\pm, S_{31}^\pm \}$.
However, in the isoceles measurement cases, where $||Z_1 Z_3||=||Z_2 Z_3||$ and $d_1 = d_2$, the exact locaion and the number of the solution are known \cite{KL2Cor}.

\subsection{The distance error case ($\psi(y)=y$)}
By Lemma \ref{le:2D} and Lemma \ref{le:3D}, if $S_{123}\neq\phi$, then $X=S_{123}$.
Now, consider only the case when $S_{123}=\phi$. 
Define $K=R_0\cap \triangle$ and 
$$T= \alpha\cap\left( S_1 \cap \overline{Z_2 Z_3}\cap \partial R_{100} \right) \cup \left( S_2 \cap \overline{Z_1 Z_3}\cap \partial R_{010} \right)  \cup \left( S_3 \cap \overline{Z_1 Z_2}\cap \partial R_{001} \right).$$
Note that $T\neq\phi$, then $K\neq\phi$.
Next, let $N_1$ be defined as the point on $S_1$ where the sum of distances to $Z_2$ and $Z_3$ is minimized. The point is uniquely defined if $S_1 \cap \overline{Z_2 Z_3}=\phi$,
but there may be two such points, otherwise.   

The number and location of solutions, whether in two or three dimensions, can be summarized as follows (the proof for the two dimensional case is provided in \cite{KL1}):
$$\left\{\begin{array}{rcl}
X = S_{123}                                                                &\mbox{ if }&  S_{123}\neq \phi, \\
X = Y_0                                                                       &\mbox{ if }&  Y_0 \in R_0, \\
X = Y_{100}                                                                &\mbox{ if }&  Y_{100}\in R_{100},\\
X = T \mbox{ and }  |X|=1 \mbox{ or } 2                      &\mbox{ if }&  T\neq\phi,\\
X = N_1                                                                      &\mbox{ if }&  T=\phi, K\neq\phi,  Y_0\in R_{100}, N_1\in R_0\\
X = S_{12}^+                                                              &\mbox{ if }&  \left(T=\phi,  K\neq\phi, Y_0\in R_{110}\right) \mbox{ or } \left( T=\phi, K\neq\phi,  Y_0\in R_{100}, N_1\in D_2\right)\\ 
X = \tilde C\cap R_{100} \mbox{ and }|X|=\infty         &\mbox{ if }&  \triangle \mbox{ is equilateral and } \frac{\sqrt 3}2 d_1 >\norm{Z_1-Z_2} > d_2 + d_3. 
\end{array}\right.$$

These cases do not cover all possible scenarios, but they do address all situations where $K\neq\phi$. For the case where $K=\phi$, much remains unknown.

\subsection{The comparison}

In the two dimensional case, we compared the solutions for the distance error ($\psi(y)=y$)  and squred distance error ($\psi(y)=y^2$) cases, as illustrated in Figures \ref{fig:Meas3Dim2_1} and \ref{fig:Meas3Dim2_2}.
The detailed settings for the measurement triangle, the measurement data, and corresponding solutions in the figures are explained in Table \ref{tab:Meas3Dim2}.

\begin{table}
\begin{centering}
\begin{tabular}{|c||c|c|}
\hline
                                                                                  &$\psi(y) =y$ \cite{KL1}                                                                            &$\psi(y)=y^2$ \cite{KL2}\\
\hline \hline
$S_{123}\neq\phi$                                                   &$X=S_{123}$ (Fig.\ref{fig:Meas3Dim2_1}(a))                                          &$X=S_{123}$(Fig.\ref{fig:Meas3Dim2_1}(b))\\
\hline
$Y_0 \in R_0$                                                           &$X=Y_0$ (Fig.\ref{fig:Meas3Dim2_1}(c))                                                &$X=Y_0$(Fig.\ref{fig:Meas3Dim2_1}(d))\\
\hline
$Y_0\in R_{010}, T=\phi, K\neq\phi, N_2\in R_0$ &$X=N_2$(Fig.\ref{fig:Meas3Dim2_1}(e))                                                 &$X=N_2$(Fig.\ref{fig:Meas3Dim2_1}(f))\\
\hline
$Y_{100}\cap R_{100}\neq\phi$ and $\triangle$ is equilateral
                                                                                  &$X=Y_{100}\cap R_{100}, |X|=\infty$(Fig.\ref{fig:Meas3Dim2_1}(g))    &$X=Y_{010}, |X|=1$(Fig.\ref{fig:Meas3Dim2_1}(h))\\
\hline                                                                                   
$|T|=2$                                                                      &$X=T$(Fig.\ref{fig:Meas3Dim2_2}(a))                                                     &$X=N_1$(Fig.\ref{fig:Meas3Dim2_2}(b))\\   
\hline
$|T|=1, N_1\in D_2$                                                &$X=T\cap R_0$(Fig.\ref{fig:Meas3Dim2_2}(c))                                     &$X=S_{12}^+$(Fig.\ref{fig:Meas3Dim2_2}(d))\\
\hline                                             
\end{tabular}
\caption{Comparison for the solutions for $\psi(y)$ and $\psi(y)=y^2$ in two dimensions. }
\label{tab:Meas3Dim2}
\end{centering}
\end{table}

It is important to note that the definitons of $Y_0, N_2,$ and $ Y_{100}$ vary depending on the choice of $\psi$. 
For instance, $Y_0$ is referred as the `Fermat point' for $\psi(y)=y$ and  as the `center of gravity' for $\psi(y)=y^2$. 
Additionally, in the case of an equilateral measurement triangle, $|Y_{100}|$ is $\infty$ for $\psi(y)=y$ and $1$ for $\psi(y)=y^2$.

In Figures \ref{fig:Meas3Dim2_1}, \ref{fig:Meas3Dim2_2}, and \ref{fig:Meas3Dim2_Psi2}, which depicts the two dimensional case and three measurements, the level sets of the objective functions are visualized using the `contour' function in Matlab. Here, the minimum value of the objective function is denoted by $m$. 
\begin{itemize}
\item{ The blue contours represent level sets ranging from $m$ to $m+3$ with an interval of $0.5$, } 
\item{ The cyan countours represent level sets from $m+3$ to $m+50$ with an interval of $2$. } 
\item{ The green countours represent level sets from $m+50$ to $m+200$ with an interval of $10$.} 
\end{itemize}
The red markers indicate the theoretical minimum points for each case.

The level sets for the distance case shown in Fig. \ref{fig:Meas3Dim2_1} (a),(c),(e),(g), Fig. \ref{fig:Meas3Dim2_2} (a) and (c), 
are more dependent on the vertices $Z_1,Z_2, $ and $Z_3$ compared to the level sets for the squared distance error case, shown in Fig. \ref{fig:Meas3Dim2_1} (b),(d),(f),(h), Fig. \ref{fig:Meas3Dim2_2} (b) and (d),

The location of sources when $K=\phi$ is only known for $\psi(y)=y^2$. 
In Fig. \ref{fig:Meas3Dim2_Psi2}, we illustrate the case when $|X|=2,3,4,5$ for flat, sharp isoscles and equilateral triangles. For detailed explanations of these cases, refer to \cite{KL2Cor} 
As shown in Corollary \ref{co:12}, it is important to note that the number of solutions for $\psi(y)=y^2$ and three dimensional case, is less than 2.

\begin{figure}
\begin{minipage}[t]{8cm}
\centerline{\epsfig{file=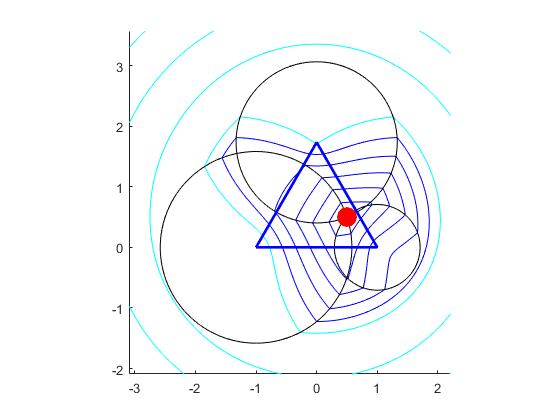, height=4cm,width=5.5cm,clip=1cm}}
\end{minipage}
\begin{minipage}[t]{8cm}
\centerline{\epsfig{file=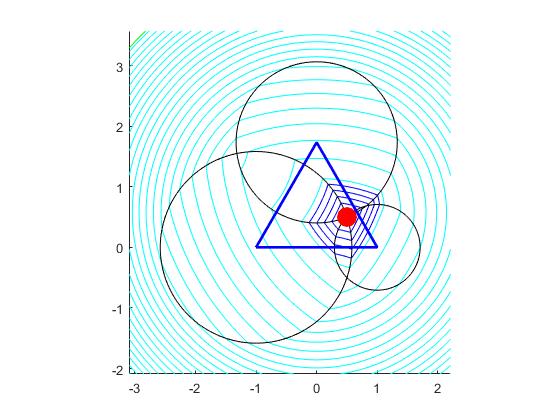, height=4cm,width=5.5cm,clip=1cm}}
\end{minipage}
\begin{center}
(a)\qquad\quad\qquad\qquad\qquad\qquad\qquad\qquad\qquad\qquad\qquad(b)
\end{center}
\begin{minipage}[t]{8cm}
\centerline{\epsfig{file=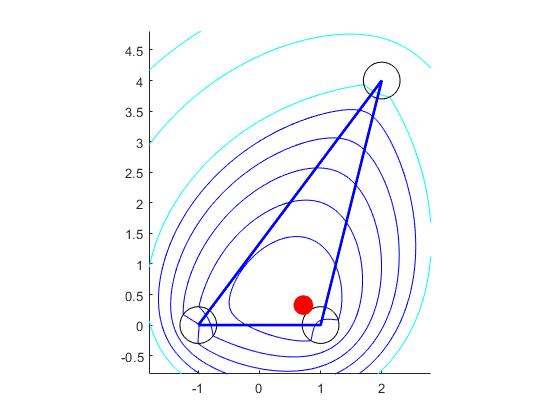, height=4cm,width=5.5cm,clip=1cm}}
\end{minipage}
\begin{minipage}[t]{8cm}
\centerline{\epsfig{file=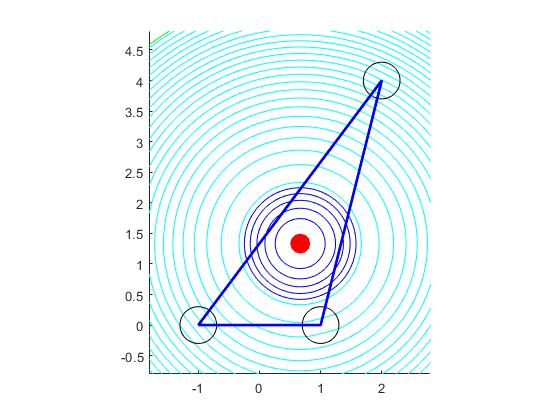, height=4cm,width=5.5cm,clip=1cm}}
\end{minipage}
\begin{center}
(c)\qquad\quad\qquad\qquad\qquad\qquad\qquad\qquad\qquad\qquad\qquad(d)
\end{center}
\begin{minipage}[t]{8cm}
\centerline{\epsfig{file=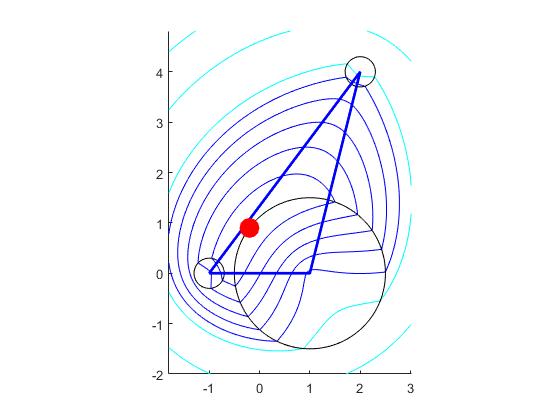, height=4cm,width=5.5cm,clip=1cm}}
\end{minipage}
\begin{minipage}[t]{8cm}
\centerline{\epsfig{file=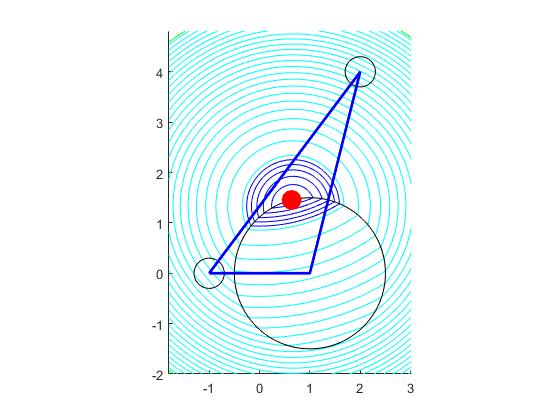, height=4cm,width=5.5cm,clip=1cm}}
\end{minipage}
\begin{center}
(e)\qquad\quad\qquad\qquad\qquad\qquad\qquad\qquad\qquad\qquad\qquad(f)
\end{center}
\begin{minipage}[t]{8cm}
\centerline{\epsfig{file=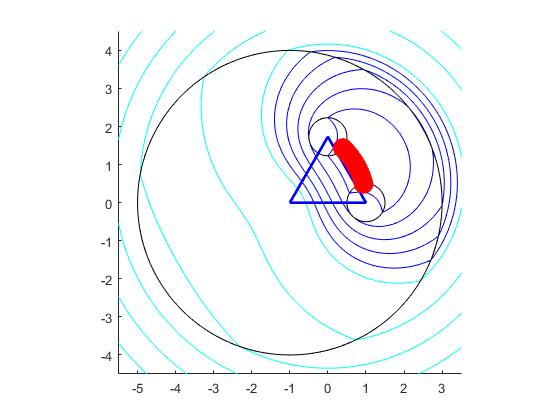, height=4cm,width=5.5cm,clip=1cm}}
\end{minipage}
\begin{minipage}[t]{8cm}
\centerline{\epsfig{file=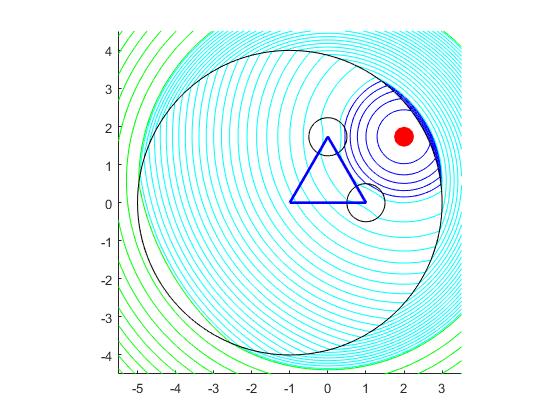, height=4cm,width=5.5cm,clip=1cm}}
\end{minipage}
\begin{center}
(g)\qquad\quad\qquad\qquad\qquad\qquad\qquad\qquad\qquad\qquad\qquad(h)
\end{center}
\caption{The source locations for cases (a,c,e,g) are based on $\psi(y)=y$, while those for cases (b,d,f,h) are based on $\psi(y)=y^2$, 
The conditions for each case are (a,b) $d_1\le \frac{\norm{Z_1 Z_2}}2$ (c,d) $\frac{\norm{Z_1 Z_2}}2\le d_1 <\norm{Z_1 Z_2},d_1+d_2\le\norm{Z_1 Z_2}$ 
(e,f) $d_1 - d_2  < \norm{Z_1 Z_2} < d_1 + d_2$ (g,h) $d_1 - d_2 \ge \norm{Z_1 Z_2}$. For detailed description of these cases see Tab. \ref{tab:Meas3Dim2}. 
}
\label{fig:Meas3Dim2_1} 
\end{figure}

\begin{figure}
\begin{minipage}[t]{8cm}
\centerline{\epsfig{file=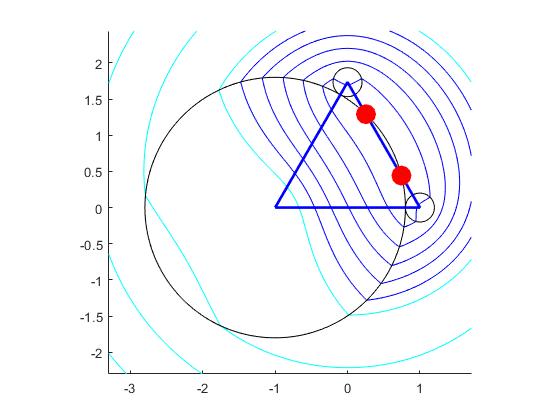, height=4cm,width=5.5cm,clip=1cm}}
\end{minipage}
\begin{minipage}[t]{8cm}
\centerline{\epsfig{file=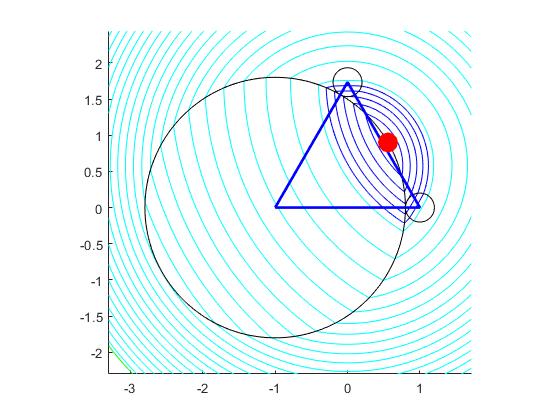, height=4cm,width=5.5cm,clip=1cm}}
\end{minipage}
\begin{center}
(a)\qquad\quad\qquad\qquad\qquad\qquad\qquad\qquad\qquad\qquad\qquad(b)
\end{center}
\begin{minipage}[t]{8cm}
\centerline{\epsfig{file=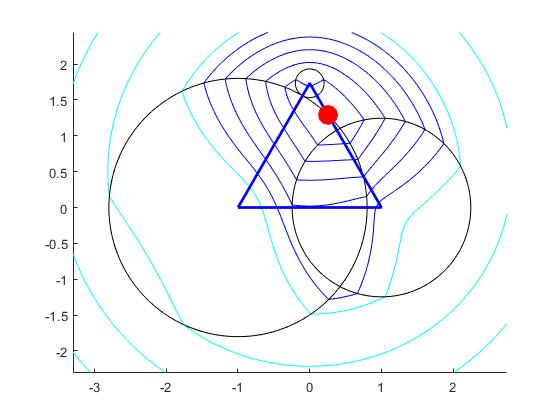, height=4cm,width=5.5cm,clip=1cm}}
\end{minipage}
\begin{minipage}[t]{8cm}
\centerline{\epsfig{file=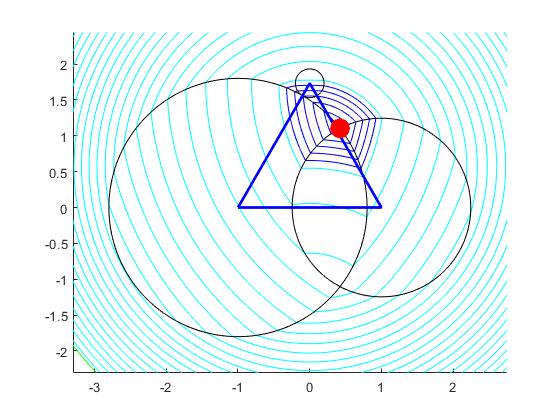, height=4cm,width=5.5cm,clip=1cm}}
\end{minipage}
\begin{center}
(c)\qquad\quad\qquad\qquad\qquad\qquad\qquad\qquad\qquad\qquad\qquad(d)
\end{center}
\caption{The source locations for cases (a,c,e,g) are based on $\psi(y)=y$, while those for cases (b,d,f,h) are based on $\psi(y)=y^2$, 
The conditions for each case are (a,b) $|T|=2$ and  (c,d) $|T|=1, N_1\in D_2$. For detailed description of these cases see Tab. \ref{tab:Meas3Dim2}. 
}
\label{fig:Meas3Dim2_2} 
\end{figure}

The number of solutions in two-dimensional cases with squared distance error can be $2,3,4,$ or $5$ . As proved in \cite{KL2Cor}, the maximum number of solutions is 5.
Examples illustrating these cases are provided in Fig. \ref{fig:Meas3Dim2_Psi2}.

\begin{figure}
\begin{minipage}[t]{8cm}
\centerline{\epsfig{file=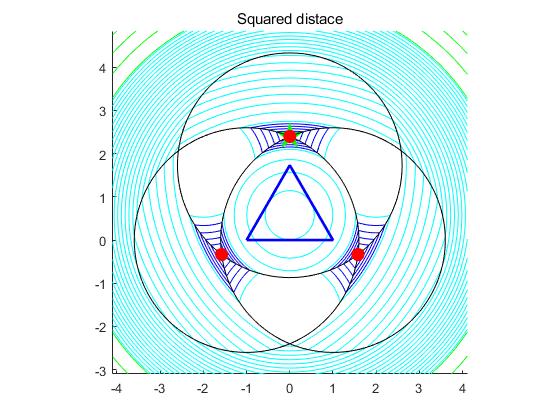, height=4cm,width=5.5cm,clip=1cm}}
\end{minipage}
\begin{minipage}[t]{8cm}
\centerline{\epsfig{file=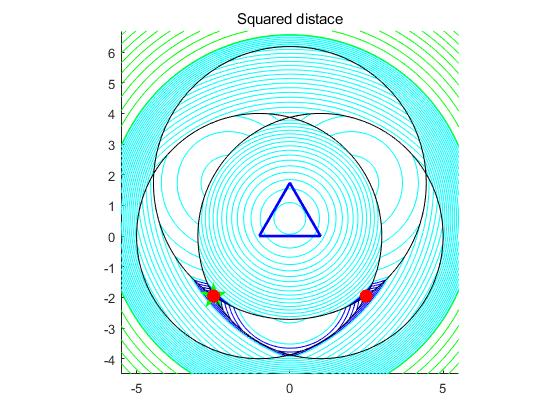, height=4cm,width=5.5cm,clip=1cm}}
\end{minipage}
\begin{center}
(a)\qquad\quad\qquad\qquad\qquad\qquad\qquad\qquad\qquad\qquad\qquad(b)
\end{center}
\begin{minipage}[t]{8cm}
\centerline{\epsfig{file=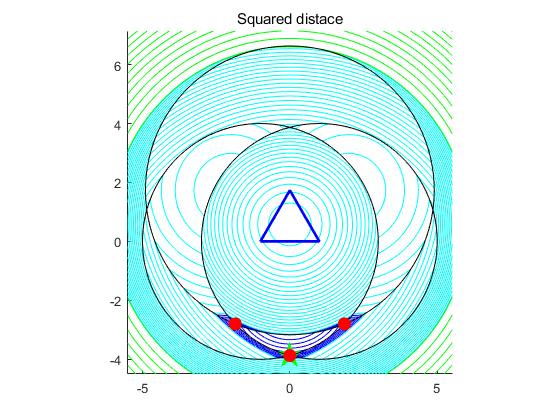, height=4cm,width=5.5cm,clip=1cm}}
\end{minipage}
\begin{minipage}[t]{8cm}
\centerline{\epsfig{file=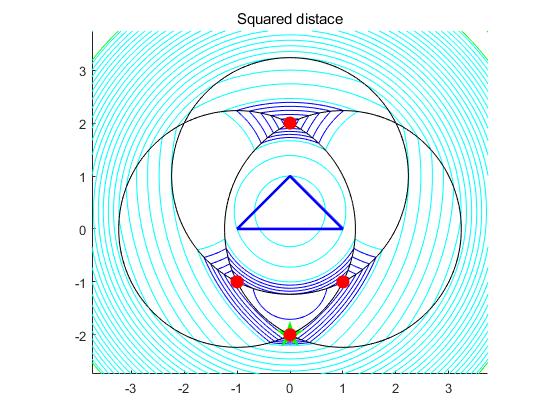, height=4cm,width=5.5cm,clip=1cm}}
\end{minipage}
\begin{center}
(c)\qquad\quad\qquad\qquad\qquad\qquad\qquad\qquad\qquad\qquad\qquad(d)
\end{center}
\begin{minipage}[t]{8cm}
\centerline{\epsfig{file=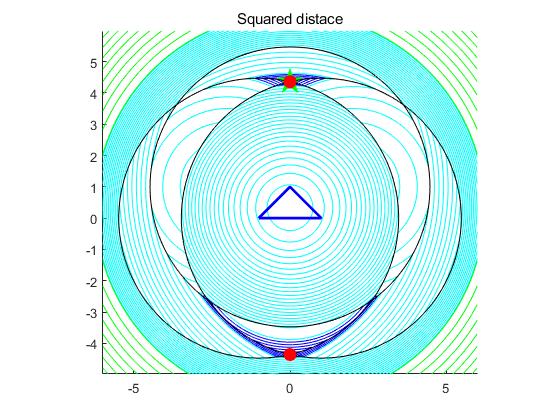, height=4cm,width=5.5cm,clip=1cm}}
\end{minipage}
\begin{minipage}[t]{8cm}
\centerline{\epsfig{file=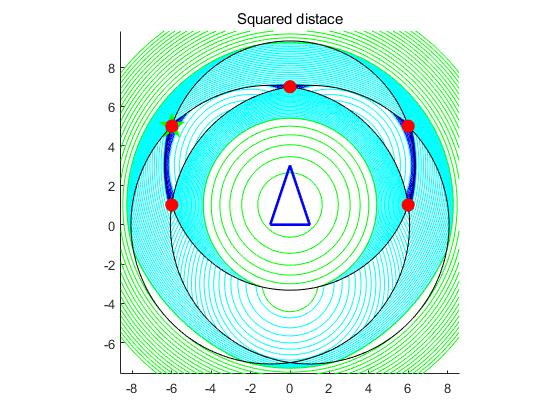, height=4cm,width=5.5cm,clip=1cm}}
\end{minipage}
\begin{center}
(e)\qquad\quad\qquad\qquad\qquad\qquad\qquad\qquad\qquad\qquad\qquad(f)
\end{center}
\begin{minipage}[t]{8cm}
\centerline{\epsfig{file=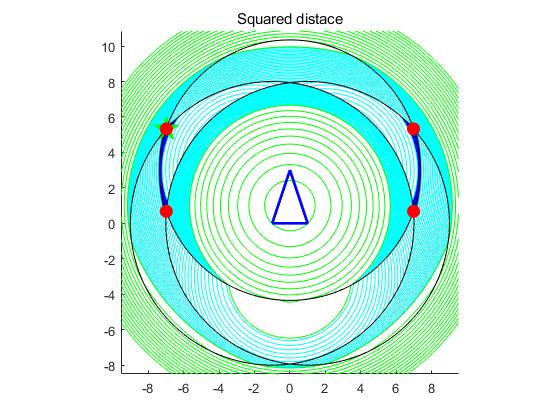, height=4cm,width=5.5cm,clip=1cm}}
\end{minipage}
\begin{minipage}[t]{8cm}
\centerline{\epsfig{file=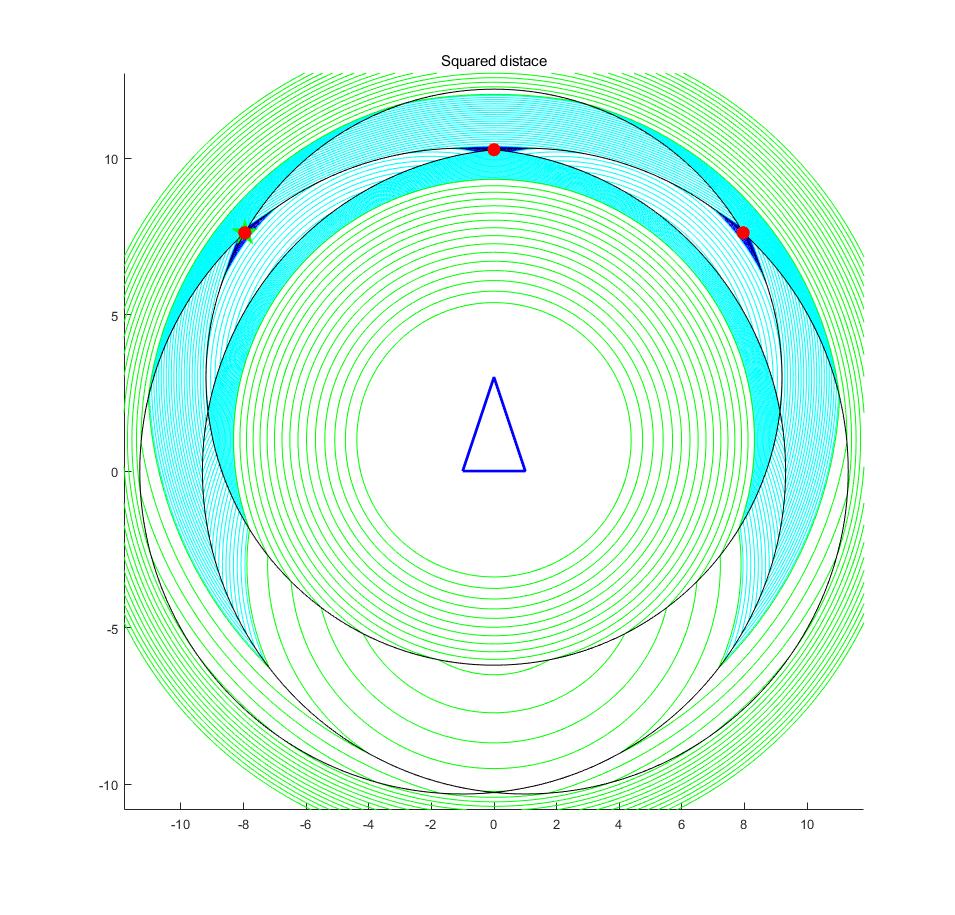, height=4cm,width=4.5cm,clip=1cm}}
\end{minipage}
\begin{center}
(g)\qquad\quad\qquad\qquad\qquad\qquad\qquad\qquad\qquad\qquad\qquad(h)
\end{center}
\caption{The location of the sources for $\psi(y)=y^2$ with $d_1=d_2$ are as follows: For the equilateral triangel cases: (a,c) 3 solutions (b) 4 solutions. 
For flat isosceles cases: (d) 4 solutions (e) 2 solutions. For sharp isosceles cases: (f) 5 solutions (g) 4 solutions (h) 3 solutions.
}
\label{fig:Meas3Dim2_Psi2} 
\end{figure}

In the three dimensional case, we compared the solutions for the distance error ($\psi(y)=y$)  and squred distance error ($\psi(y)=y^2$) scenarios, as illustrated in Fig. \ref{fig:Meas3Dim3}.
Detailed settings for the measurement triangle, the measurement data, and corresponding solutions are provided in Table \ref{tab:Meas3Dim3}.

Three spheres are depicted with  blue, green, and cyan dots. The gradient directions on grid points, with a step size of $1.5$ in all three axis directions, are illustrated using Matlab's `quiver' command. Streamlines are also plotted in these figures. 

\begin{table}
\begin{centering}
\begin{tabular}{|c||c|c|}
\hline
                                                                                  &$\psi(y) =y$                                                                                             &$\psi(y)=y^2$ \\
\hline \hline
$S_{123}\neq\phi$                                                   &$X=S_{123}$ (Fig.\ref{fig:Meas3Dim3}(a))                                          &$X=S_{123}$(Fig.\ref{fig:Meas3Dim3}(b))\\
\hline
$|T|=2$                                                                      &$X=T$(Fig.\ref{fig:Meas3Dim3}(c))                                                     &$X=N_1$(Fig.\ref{fig:Meas3Dim3}(d))\\   
\hline
$|T|=1, N_1\in D_2$                                                &$X=T\cap R_0$(Fig.\ref{fig:Meas3Dim3}(e))                                     &$X=S_{12}^+$(Fig.\ref{fig:Meas3Dim3}(f))\\
\hline 
$Y_{100}\cap R_{100}\neq\phi$ and $\triangle$ is equilateral
                                                                                  &$X=Y_{100}\cap R_{100}, |X|=\infty$(Fig.\ref{fig:Meas3Dim3}(g))    &$X=Y_{010}, |X|=1$(Fig.\ref{fig:Meas3Dim3}(h))\\
\hline                                                                                                                               
\end{tabular}
\caption{Comparison for the solutions for $\psi(y)$ and $\psi(y)=y^2$ in three dimensions. }
\label{tab:Meas3Dim3}
\end{centering}
\end{table}

\begin{figure}
\begin{minipage}[t]{8cm}
\centerline{\epsfig{file=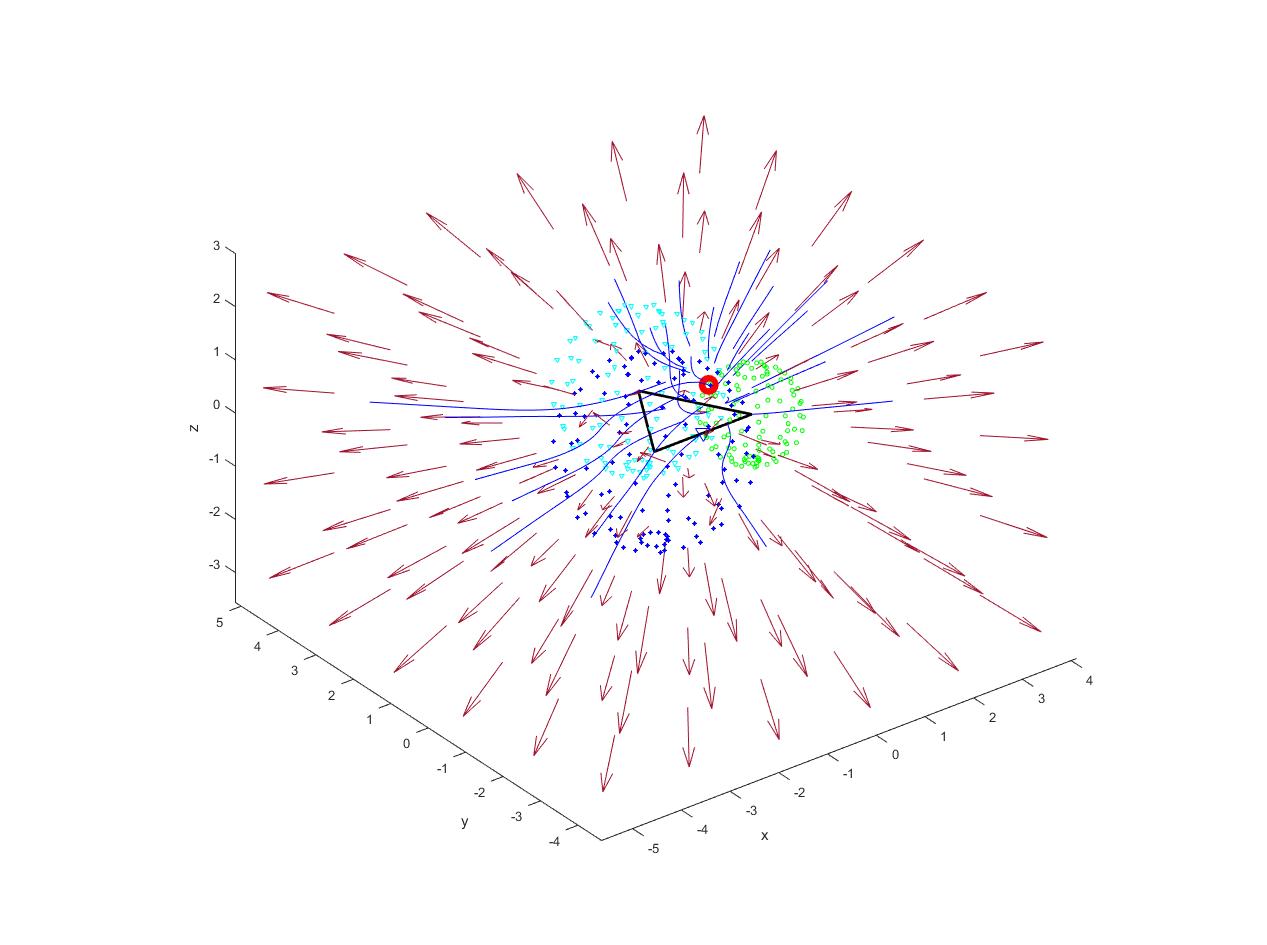, height=4cm,width=5.5cm,clip=1cm}}
\end{minipage}
\begin{minipage}[t]{8cm}
\centerline{\epsfig{file=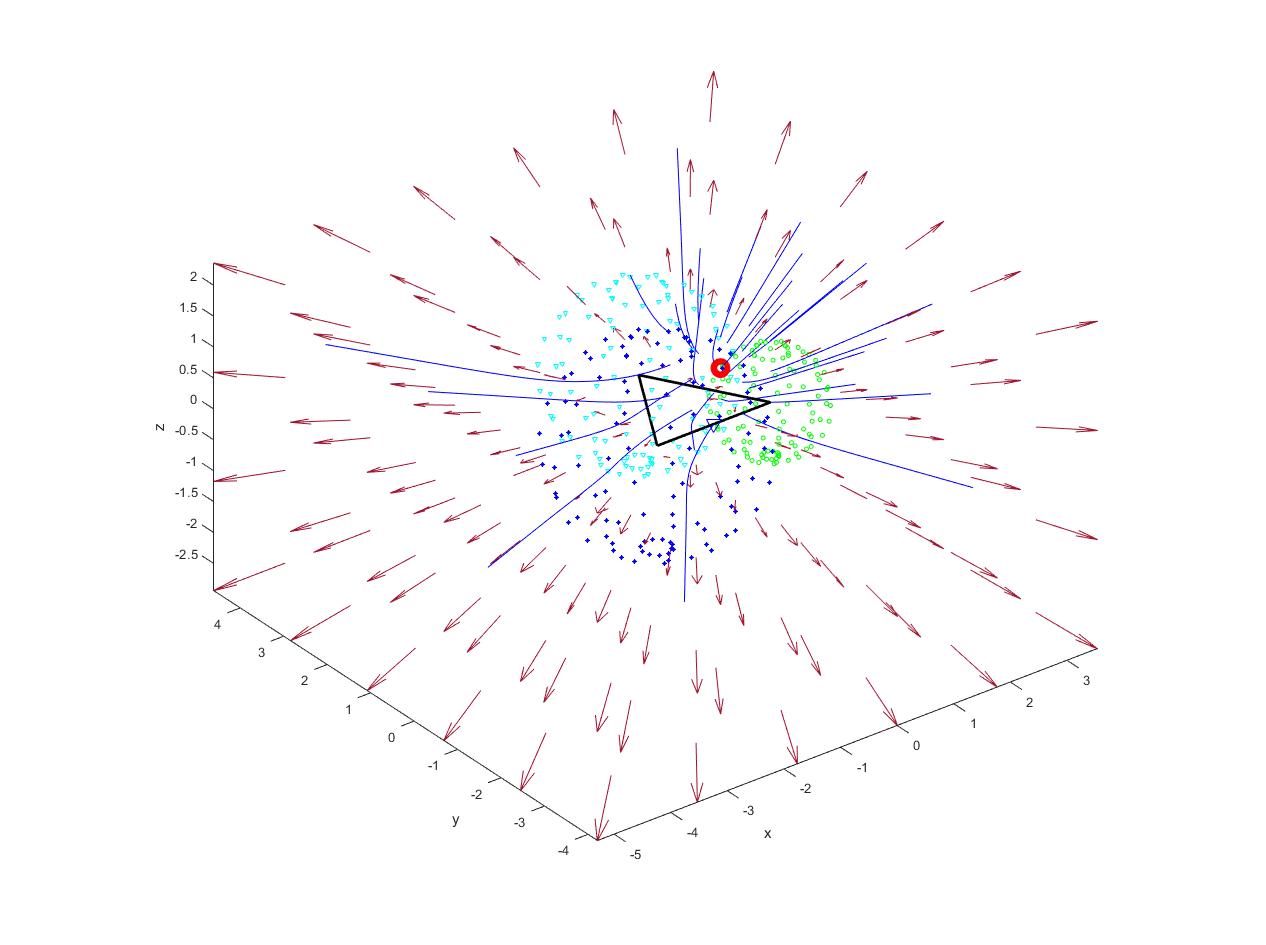, height=4cm,width=5.5cm,clip=1cm}}
\end{minipage}
\begin{center}
(a)\qquad\quad\qquad\qquad\qquad\qquad\qquad\qquad\qquad\qquad\qquad(b)
\end{center}
\begin{minipage}[t]{8cm}
\centerline{\epsfig{file=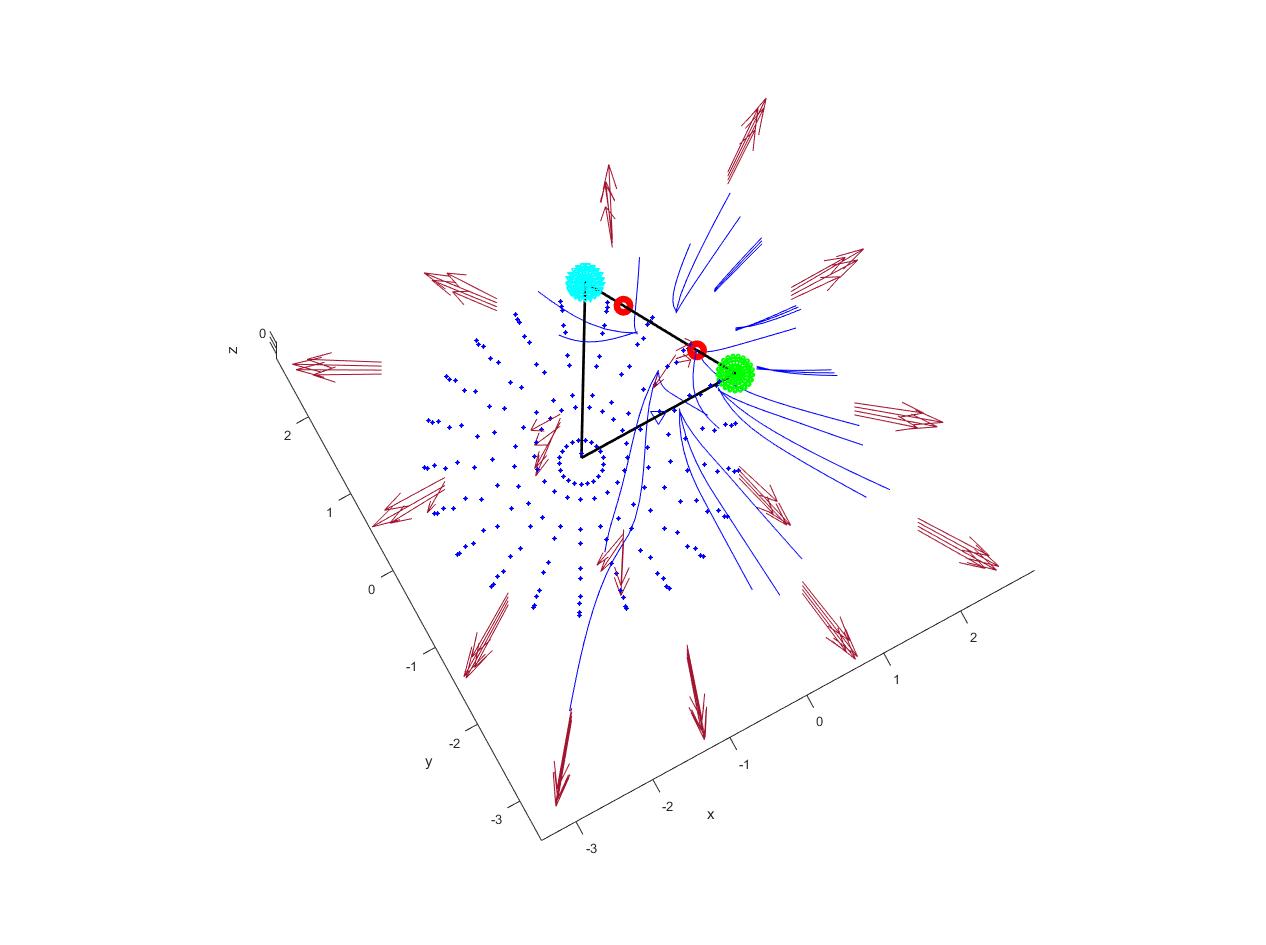, height=4cm,width=5.5cm,clip=1cm}}
\end{minipage}
\begin{minipage}[t]{8cm}
\centerline{\epsfig{file=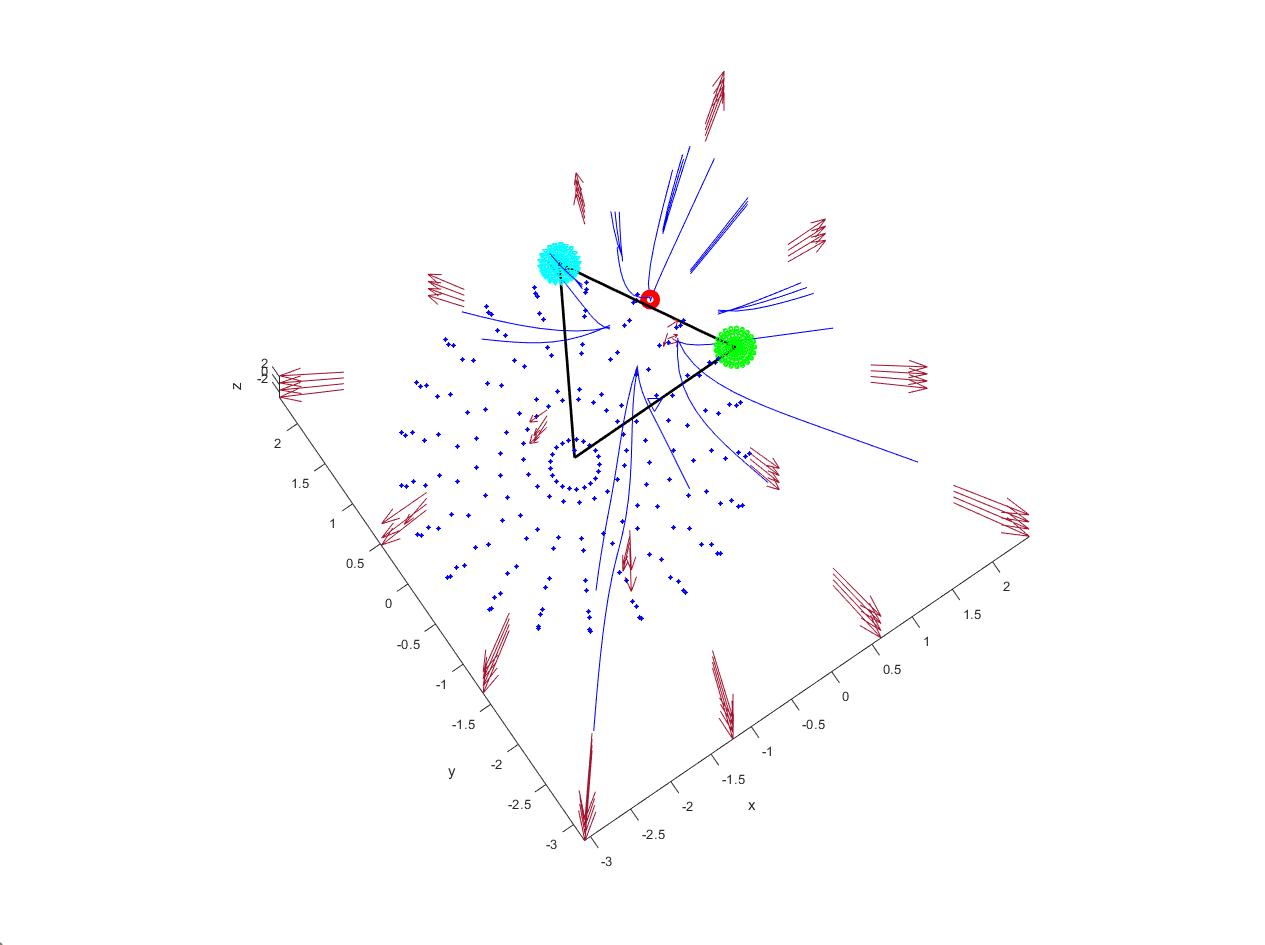, height=4cm,width=4.5cm,clip=1cm}}
\end{minipage}
\begin{center}
(c)\qquad\quad\qquad\qquad\qquad\qquad\qquad\qquad\qquad\qquad\qquad(d)
\end{center}
\begin{minipage}[t]{8cm}
\centerline{\epsfig{file=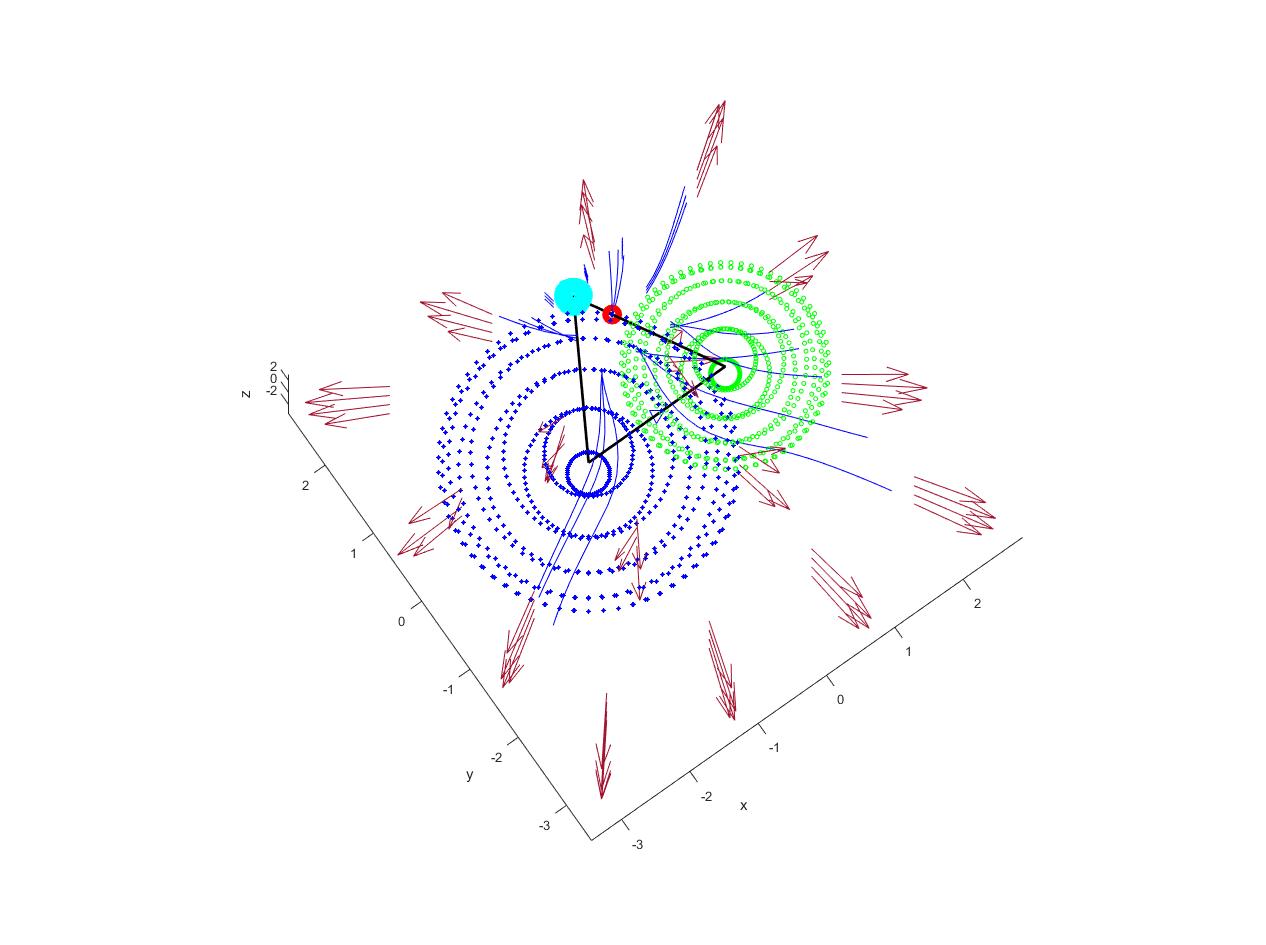, height=4cm,width=5.5cm,clip=1cm}}
\end{minipage}
\begin{minipage}[t]{8cm}
\centerline{\epsfig{file=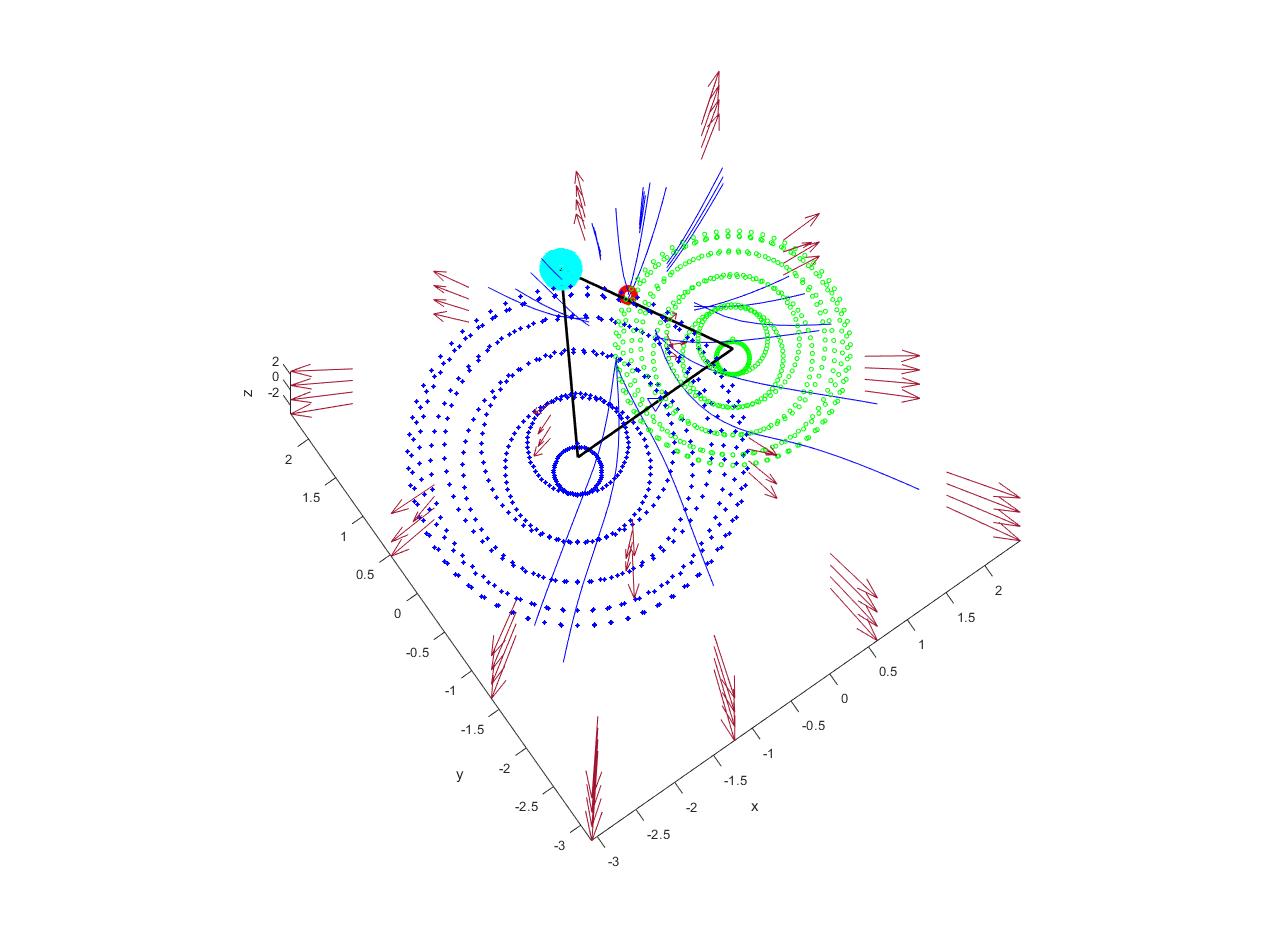, height=4cm,width=5.5cm,clip=1cm}}
\end{minipage}
\begin{center}
(e)\qquad\quad\qquad\qquad\qquad\qquad\qquad\qquad\qquad\qquad\qquad(f)
\end{center}
\begin{minipage}[t]{8cm}
\centerline{\epsfig{file=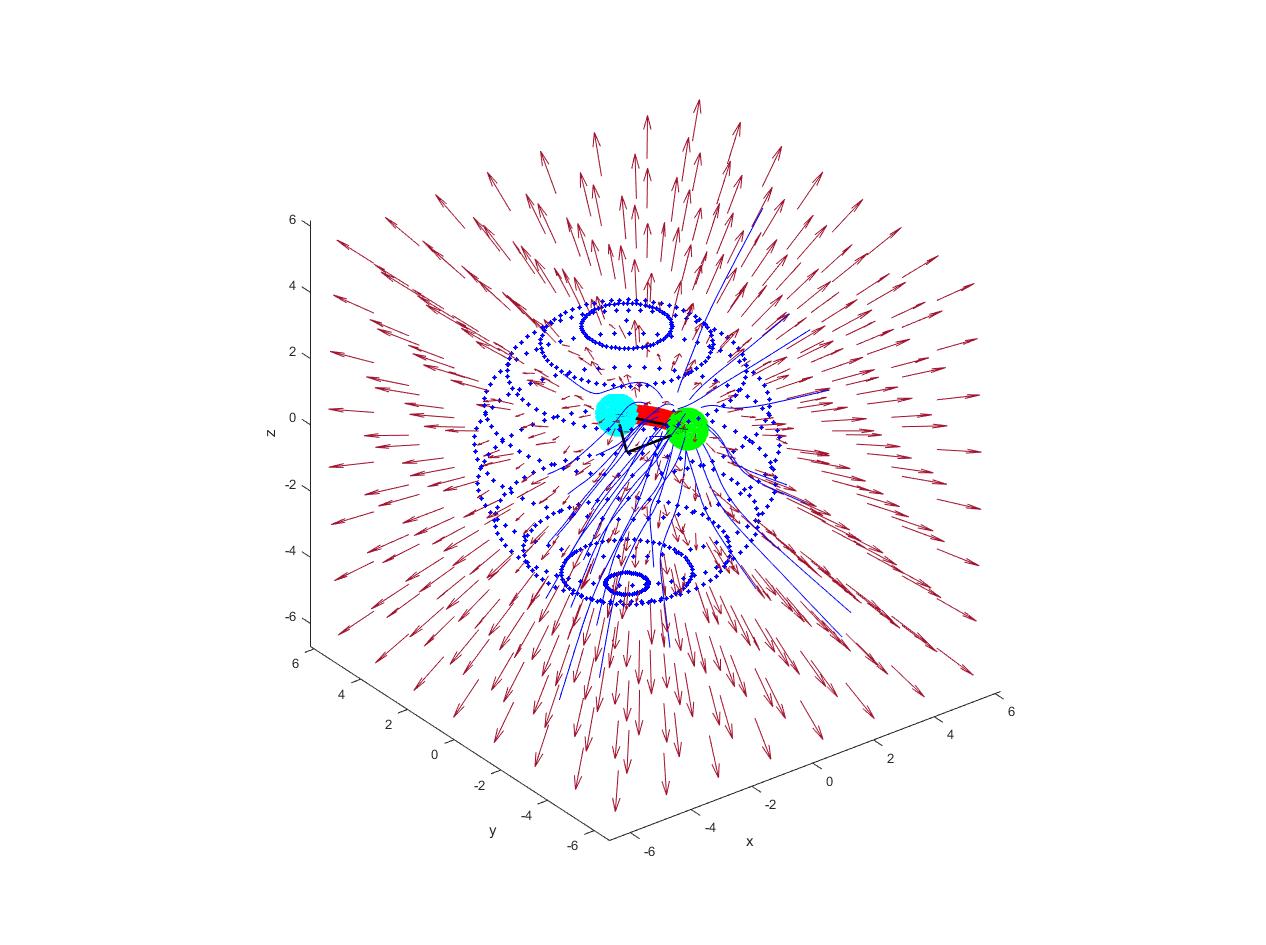, height=4cm,width=5.5cm,clip=1cm}}
\end{minipage}
\begin{minipage}[t]{8cm}
\centerline{\epsfig{file=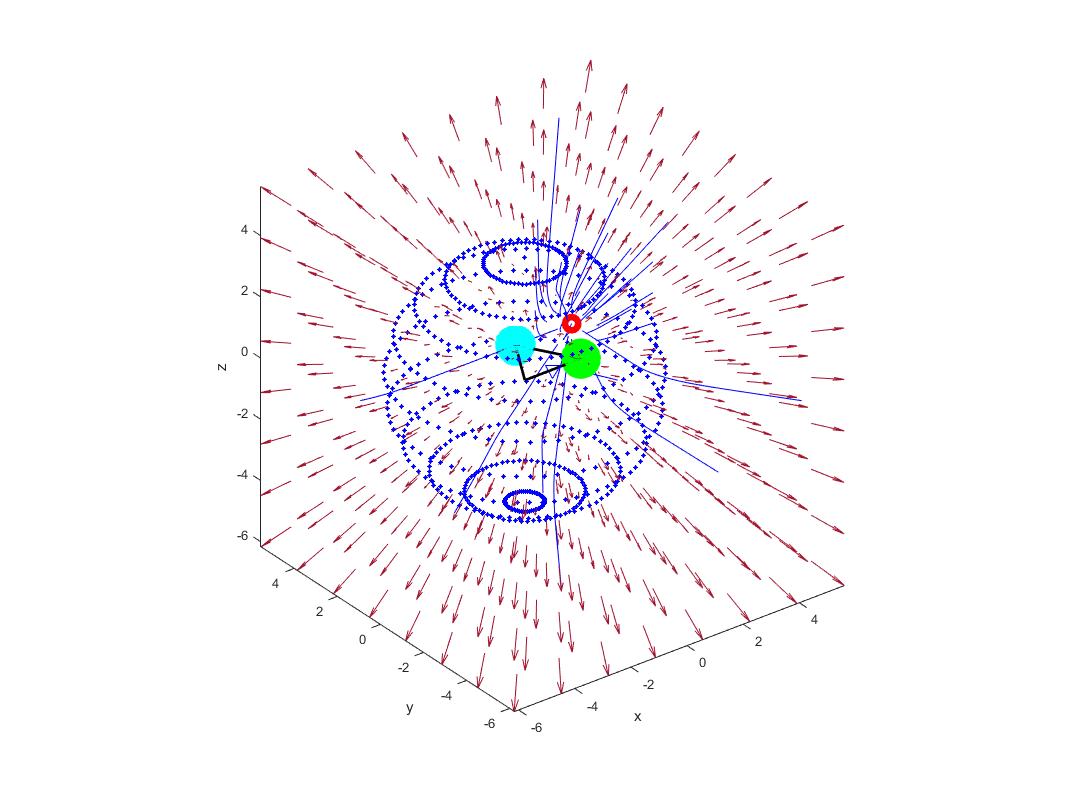, height=4cm,width=5.5cm,clip=1cm}}
\end{minipage}
\begin{center}
(g)\qquad\quad\qquad\qquad\qquad\qquad\qquad\qquad\qquad\qquad\qquad(h)
\end{center}
\caption{Three-dimensional solution with gradient direction quiver plots when (a,b) $S_{123}\neq\phi$, (c,d) $|T|=2$, (e,f) $|T|=1$ and $N_1\in D_2$, and  (g,h) $Y_{100}\cap R_{100}\neq\phi$ with $\triangle$ being equilateral. Plots (a,c,e,g) correspond to the distance error case, and plots (b,d,f,h) correspond to the squared distance error case.  
}
\label{fig:Meas3Dim3} 
\end{figure}

\section{Conclusion}
In this study, we addressed the source localization problem by fomulating two primary minimization problems: distance error and squared distance error minimization. 
We investigated these problems in both three-dimensional space,which is the standard scenario, and in two-dimensional space as a simplified case. 
Additionally, we analyzed the number of possible source solutions when the number of measurements is fewer than three.

Even though a single measurement yields a solution thhat is a disk in 2D and a ball in 3D for both minimizations, the step lengths of the level sets for the distance error cases are more
uniform compared to the squrared distance error cases.

For two measruements, most senarios for the distance error case produced an infinite number of solutions, even in two dimensions. However, in the two-dimensional squared distance error case, 
the maximum number of solutions is 2, while in three-dimensional squared distance error case, an infinite number of solutions occur s in only one of four cases.
This resutt is related to the fact that the intersection of 
two disks is two-point set in two dimension, whereas the intersection of two balls is a circle with infinite points in three dimension, provided the interesection is nonempty nor one-point set and neither ball or disk does not contain the other..

The scenarios for three measurements are not fully explored. The maximum number of solutions is infinite for the distance error case, 5 for two-dimensional squared distance case, 
and 2 for three dimensional squared distance case. The difference  between two- and three- dimensional squared distance cases arises from Lemma \ref{le:3D}.

Extending this research to a multi-measruement analysis will be the focus of future works.  
The existence of multiple potential signal sources adds complexity to the localization process. Accurately distinguishing among these sources and correctly identifying the true one is crucial for ensuring the stability and reliability of the GPS system. This research on source localization with limited measurements aims to resolve potential ambiguities and non-uniqueness in the solutions. A key contribution of this work is the exploration of how these limitations affect the problem and the identification of strategies to mitigate them. 

\section{Acknowledgements}
This paper was supported by the NRF (National Research Foundation of Korea) grant funded by Korean government (Ministry of Science and ICT : RS-2023-00242308).



\begin{thebibliography}{99}                                                                   
%
%
%
%
%
%
%
%
%
%
  
\bibitem{KL1} Kwon K,
\newblock  Uniqueness and nonuniqueness for the $L^1$ minimization source localization problem with three measurements,
\newblock Applied Mathematics and Computation, {\bf 413}(2022)  126649  

\bibitem{KL2} Kwon K,
\newblock  Exact solutions for source localization with squared distance error,
\newblock Applied Mathematics and Computation {\bf 427}(2022) 127187

\bibitem{KL2Cor} Kwon K,
\newblock The number of global solutions for GPS source loacalizaiton in two-dimension, 
\newblock Journal of Mathematics {\bf 2024}(2024) 7980810

\end{thebibliography}
\end{document}